\crefname{appsec}{Appendix}{Appendices}
\numberwithin{equation}{section}
\newcommand{\R}{{\mathbb R}}
\newcommand{\Z}{{\mathbb Z}}
\newcommand{\C}{{\mathbb C}}
\renewcommand{\mod}{{\rm mod}\,}
\newcommand{\re}{{\rm Re}\,}
\newcommand{\im}{{\rm Im}\,}
\newcommand{\res}{{\rm res}\, }
\theoremstyle{plain}
\newtheorem{theorem}{Theorem}[section]
\newtheorem{lemma}{Lemma}[section]
\newtheorem{proposition}{Proposition}[section]
\newtheorem{corollary}{Corollary}[section]
\theoremstyle{definition}
\newtheorem{definition}{Definition}[section]
\title{Difference equations in the complex plane : 
quasiclassical asymptotics and Berry phase} 
\author{A. Fedotov}
\address{Saint Petersburg State University,
7/9, Universitetskaya nab., 
Saint Petersburg, 199034, Russia}
\email{a.fedotov@spbu.ru}
\author{E. Shchetka}
\address{Saint Petersburg Department of V.A.Steklov Institute of Mathematics of 
the Russian Academy of Sciences,  27, Fontanka, Saint Petersburg, 191023, Russia;
Chebyshev Laboratory, 14th Line 29B, Vasilyevsky Island, 
Saint Petersburg, 199178, Russia.}
\email{e.shchetka@spbu.ru}
\keywords{Difference equation, complex WKB method, 
quasiclassical asymptotics, geometric phase}
\thanks{The work of A.F. was supported by Russian foundation 
for basic research under the grant 17-01-00668.
The work of E.S. was supported by ``Native towns'', 
a social investment program of PJSC ``Gazprom Neft''}
\begin{document}
\begin{abstract}
We study solutions to the difference equation
$\Psi(z+h)=M(z)\Psi(z)$ where  $z$ is a complex variable, 
$h>0$ is a  parameter, and  $M:\mathbb{C}\mapsto SL(2,\mathbb{C})$ 
is a given analytic  function. We describe the asymptotics of  
its analytic solutions as $h\to 0$. The asymptotic formulas contain an analog of 
the geometric (Berry) phase well-known in the quasiclassical analysis 
of differential equations. 
\end{abstract}
\maketitle
%
%
%
\section{Introduction}\label{ch:intro}
\noindent
For $M:\mathbb{C}\mapsto SL(2,\mathbb{C})$ being a given analytic
function, we consider  the equation
\begin{equation}\label{main}
\Psi(z+h)=M(z)\Psi(z)
\end{equation}
where $z$ is a complex variable, and $h>0$ is a parameter. 
We describe asymptotics of analytic vector
solutions $\Psi$ to~\eqref{main} as $h\to 0$.
\\
Formally, $\Psi(z+h)=e^{h\frac{d}{dz}}\psi(z)$, and being a small parameter 
in front of the derivative, $h$ can be regarded as a quasiclassical asymptotic
parameter. 
\\
The quasiclassical asymptotics of solutions to the ordinary differential equation
\begin{equation}\label{differential-eq}
  ih\frac{d\Psi}{dx}(x)=M(x)\Psi(x)
\end{equation}
as $h\to 0$ are described by means of the famous WKB (Wentzel, 
Kramers and  Brillouin) method. There is a huge literature devoted 
to  this method and its applications. If $M$ is analytic,  one uses a  method often called  
the  complex WKB method, see, e.g., chapters 3 and 5 in~\cite{Fe:93}
and  chapter 7 in~\cite{W:87}. This method allows to study  
solutions to~\eqref{differential-eq} on the complex plane. Even when the input 
problem does not require to go into the complex plane, one uses this 
method  to simplify the analysis: it allows to go around, say, 
turning points or singularities of solutions located on the real line, and
to compute the asymptotics of  their Wronskians in the domains where they 
are easy to be  computed.  The latter makes the  complex WKB method 
very efficient for computing exponentially small quantities.  Actually, 
in~\cite{Fe:93} one can find various  interesting examples of problems solved 
using this method.
\\
To study difference equations on the real axis in the quasiclassical approximation, 
one uses methods similar to the classical WKB methods (e.g.,~\cite{G-at-al}), 
pseudodifferential operator theory (e.g., \cite{H-S:88}), and Maslov's canonical 
operator method (e.g. \cite{Dobro}).
\\
For difference equations on the complex plane, a complex WKB method can play 
the same role as for differential ones, and an analog of the complex 
WKB method  for difference equations is being  developed in~\cite{B-F:94, 
F-Shch:17, F-Shch:18, F-K:18, F-K:18a} and in the present paper. In~\cite{B-F:94, 
F-Shch:17, F-Shch:18, F-K:18, F-K:18a} the authors develop an analog of the complex 
WKB method for the one-dimensional difference Schr\"odinger equation
\begin{equation}\label{scalar-eq}
\psi(z+h)+\psi(z-h)+v(z)\psi(z)=0,
\end{equation}
where $v$ is an analytic function.
In this paper we extend this  method to
the matrix difference \cref{main},  get asymptotic formulas 
for its solutions, and, in particular, find an analog of the geometric phase (Berry phase)
well-known in the case of differential equations, see~\cite{B:84, S:83}.
\\
Our work is motivated by the analysis of the spectrum of the Harper operator 
acting in $L_2({\mathbb R})$ by the formula $H\psi(z)=\psi(z+h)+\psi(z-h)+
2\lambda \cos z\,\psi(z)$.  This operator arises in the solid state physics when 
studying  an electron in a crystal submitted to a magnetic field, see, e.g., 
the introduction sections in~\cite{Wi, GHT:89} and references therein. For irrational 
$h$ the spectrum coincides with one of the famous almost Mathieu operator, 
and is a Cantor set, see, e.g.~\cite{A-J:09}. In~\cite{Wi} heuristically, and 
in~\cite{H-S:88} rigorously,  the authors obtain  in the quasiclassical 
approximation a description of the spectrum similar to one of the classical 
Cantor set:  they discovered step by step sequences of smaller and smaller 
spectral gaps. Note that the gaps of each sequence appear to be exponentially 
small with respect to the gaps of the previous one. To study the geometrical 
properties of the spectrum, Buslaev and Fedotov have suggested a renormalization 
approach based on ideas of the Floquet theory, see~\cite{F:13}. A crucial role in 
their analysis  is played by the minimal entire solutions to the Harper equation 
$\psi(z+h)+\psi(z-h)+2\lambda \cos z\,\psi(z)=E\psi(z)$, where $E$ is a spectral 
parameter. To study them in the quasiclassical approximation,  a version of the 
complex WKB for difference equations was developed, see~\cite{B-F:94, B-F:94a, F:13}.
The analysis of geometrical properties of the spectrum  also requires to analyze 
solutions to matrix difference equations of the form~\eqref{main} with complex 
coefficients (see, section 2.3.2 in \cite{F:13}). Similar problems arise when, instead 
of the Harper operator, one studies more general difference and differential  
one-dimensional quasiperiodic Schr\"odinger equations with two frequencies, 
see~\cite{F:13}.
\\
Of course, difference equations in $\mathbb C$  with small $h$ arise  in many 
other fields of mathematics and physics. For example, they appear  in  the study 
of diffraction of classical waves by wedges, see, e.g., equation (2.1.2)  
in~\cite{L-Z:03}.  A small shift parameter arises in the case of  narrow wedges 
(as the shift parameters appearing in these problems are proportional to the angles 
of the wedges, see~\cite{L-Z:03}). 
\\
In this paper we study two cases: the case when $M$ is analytic in a bounded 
domain, and the case when $M$ is a  trigonometric polynomial. 
\\
In the next section we describe the main objects of the complex WKB method for \cref{main}:
the complex momentum, geometric phase, canonical curves and canonical domains.
Then,  we formulate and discuss two our theorems  on the existence of analytic solutions 
to~\eqref{main} having a simple quasiclassical behavior  in certain complex domains. 
In \cref{s:gph} we turn to the geometric phase appearing in the asymptotic 
formulas. It is very natural to consider it as an 
integral of a meromorphic differential (meromorphic differential 1-form) on a Riemann 
surface, and we study this differential in details. In \cref{s:proof1} we prove the 
existence of analytic solutions having simple  asymptotic behavior in bounded 
domains, and in \cref{s:proof2} we turn to the case where $M$ is a trigonometric 
polynomial. 
\\
Our results  were announced in a short note~\cite{F-Shch:17DD}  (conference proceedings).
\section{The main construction of the complex WKB method}
\label{s:basics}
We begin with formulating our assumptions on the matrix $M$.
\subsection{Our assumptions}\label{ss:assumptions}
We assume that either the domain of analyticity of $M$ is bounded or   
$M$ is a trigonometric polynomial, i.e., 
\begin{equation}\label{eq:M:poly}
M(z)=\sum_{j=-k}^l M_j e^{2\pi i j z},\quad z\in \C,
\end{equation}
\noindent
where $M_j$ are Fourier coefficients.
We do not consider the degenerate case where 
$M_{12} M_{21}\equiv0$  (in this case~\cref{main} can be solved explicitly). 
\\
In the case of~\eqref{eq:M:poly}, we assume also that
$k,l>0$, that   $ \tr M_{-k}\tr M_l \ne 0$, and that
\begin{equation}
  \label{eq:d/a}
  M_{22}(z)/M_{11}(z) \text{ stays bounded as } |\im z|\to\infty.
\end{equation}
The last hypothesis can be removed and is made just for the sake of simplicity.  
\subsection{The complex momentum}
The {\it complex momentum} $p$ is the multivalued analytic 
function defined in the domain of analyticity of $M$ by the formula 
\begin{equation}\label{eq:p}
2\cos p(z)=\tr M(z), \quad z\in D.
\end{equation} 
The branch points of $p$ satisfy the equations $\tr M(z)=\pm 2$. 
We call points where $\tr M(z)\in\{\pm 2\}$ turning points.
We say that a subset of the domain of analyticity of $M$   {\it regular} if it  contains no turning points.
\\
As $\det M(z)\equiv 1$, the eigenvalues of $M(z)$ are equal to $e^{\pm ip(z)}$.
If $z$ is regular, one has $e^{ip(z)}\ne e^{-ip(z)}$.
\subsection{The geometric phase}\label{sss:geom-phase}
Let $R\subset \C$ be a regular simply connected domain.  We fix in $R$ an analytic branch $p$ 
of the complex momentum.
\\
Let $r^\pm:R\mapsto \mathbb C^2$ be two nontrivial analytic functions
satisfying the equations
\begin{equation}
  \label{d:rev}
  M(z) r^\pm(z)=e^{\pm ip(z)}r^\pm(z),\quad z\in R.
\end{equation}
We set
\begin{equation}\label{eq:left-right}
l^\pm(z)=(r^\mp)^{\mathrm T}(z)\, \sigma,\qquad 
\sigma=\begin{pmatrix} 0 & -1 \\ 1 & 0 \end{pmatrix},
\end{equation}
where ${\;\cdot\;}^{\mathrm T}$ denotes transposition. 
By~\Cref{le:left-right}, we have
\begin{equation}
  \label{d:lev}
  l^\pm(z) M(z)=e^{\pm ip(z)}l^\pm(z),\quad z\in R.
\end{equation}
If $r^+(z)\ne 0$ ( $r^-(z)\ne 0$ ), then $r^+(z)$ ( resp., $r^-(z)$ ) is  a 
right eigenvector of $M(z)$, and  $l^-(z)$ ( resp., 
$l^+(z)$ ) is its left  eigenvector.
\\
The analytic functions $z\mapsto l^\pm(z) r^\pm(z) $ are 
not identically zero (in view of \eqref{eq:det:1}), and 
 we define in $R$ two meromorphic differentials  $\Omega_\pm$ by the formulas
\begin{equation}\label{def:omega}
\Omega_\pm(z)=\mp\frac{i}{2}d p(z) -\frac{l^\pm(z)\,d\,r^\pm(z)}
{l^\pm(z) r^\pm(z)}.
\end{equation}
Let us note that the poles of $\Omega_\pm$ are located at points where $r_\pm(z)=0$ 
(in view of \Cref{th:V}). Let $z_0\in R$, and $r^\pm (z_0)\ne 0$. 
The integrals $\int_{z_0}^z\Omega_\pm$ are called {\it  geometric phases}.
Very close objects are well-known in the  WKB analysis of differential equations, 
see~\cref{differential_eq}. But, it looks like their properties has not been 
systematically studied as properties of functions of the complex variable.  
\\
We study  $\Omega_\pm$ in~\cref{s:gph}.
For two column vectors $u,v\in\mathbb{C}^2$, we denote by  $(u\;v)$   
the 2$\times$2-matrix with the columns $u$ and $v$.
In \cref{s:proof:thm:Omegas} we check

\begin{theorem}\label{th:V} Let $z_0\in R$ and  $r^\pm(z_0)\ne 0$.
In the domain $R$ each of the functions
\begin{equation}
  \label{d:ev}
 V^\pm: z\mapsto \exp\left(\int_{z_0}^z\Omega_\pm\right)\,r^\pm(z)
\end{equation}
is analytic, does not vanish and is independent  of the choice of $r^\pm$
up to a constant factor. Moreover, one has
\begin{equation}
  \label{eq:det}
  \det(V^+(z)\;V^-(z))=\det(r^+(z_0)\;r^-(z_0))\ne 0.
\end{equation}
\end{theorem}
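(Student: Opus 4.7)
The plan is to verify in turn: analyticity and non-vanishing of $V^\pm$ on $R$, independence from the choice of $r^\pm$ up to a constant factor, and the determinantal identity~\eqref{eq:det}. The key mechanism for the first two assertions is that the poles of $\Omega_\pm$ at the zeros of $r^\pm$ are cancelled exactly by those zeros in the product $\exp(\int\Omega_\pm)\cdot r^\pm$. To make this precise, I would analyze $\Omega_\pm$ locally at a zero $z^*$ of $r^\pm$. Writing $r^\pm(z)=(z-z^*)^n\tilde r(z)$ with $\tilde r$ analytic and $\tilde r(z^*)\neq 0$, one checks from~\eqref{d:rev} that $\tilde r(z^*)$ is again an eigenvector of $M(z^*)$ for $e^{\pm ip(z^*)}$; by distinctness of the eigenvalues at regular points, $l^\pm(z^*)\tilde r(z^*)\neq 0$. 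Substituting into~\eqref{def:omega} and using that $dp$ is regular in $R$ (no turning points), one obtains
\begin{equation*}
\Omega_\pm=-\frac{n\,dz}{z-z^*}+(\text{analytic at } z^*),
\end{equation*}
so $\Omega_\pm$ has a simple pole at $z^*$ with \emph{integer} residue $-n$. Integer residues together with simple connectedness of $R$ imply that $\exp(\int_{z_0}^z\Omega_\pm)$ is a single-valued meromorphic function on $R$, with a zero of order exactly $n$ at $z^*$; multiplying by $r^\pm$, which vanishes to order $n$ there, one gets a function analytic and non-vanishing throughout $R$.

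For independence from the choice of $r^\pm$, I would observe that any other analytic eigenvector $\tilde r^\pm$ satisfying~\eqref{d:rev} is of the form $\tilde r^\pm=f\,r^\pm$ for a meromorphic function $f$ on $R$ (the eigenspaces of $M$ being one-dimensional in the regular domain $R$), holomorphic and non-zero at $z_0$. A direct computation, in which the scalar coming from a simultaneous rescaling of $r^\mp$ cancels between numerator and denominator of $l^\pm dr^\pm/(l^\pm r^\pm)$, yields $\tilde\Omega_\pm=\Omega_\pm-d\log f$. Exponentiation then gives $\tilde V^\pm(z)=f(z_0)\,V^\pm(z)$, a non-zero constant multiple.

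For the determinantal identity, I would set $W(z)=\det(r^+(z)\;r^-(z))$. Using the antisymmetry $\sigma^{\mathrm T}=-\sigma$ and~\eqref{eq:left-right}, direct computations yield $l^+r^+=W$, $l^-r^-=-W$, $l^+dr^+=\det(dr^+\;r^-)$ and $l^-dr^-=-\det(r^+\;dr^-)$. Expanding
\begin{equation*}
dW=\det(dr^+\;r^-)+\det(r^+\;dr^-)=l^+dr^+-l^-dr^-,
\end{equation*}
one obtains the identity $\Omega_++\Omega_-=-dW/W$. Integration from $z_0$ to $z$ and exponentiation then give
\begin{equation*}
\det(V^+(z)\;V^-(z))=\frac{W(z_0)}{W(z)}\cdot W(z)=W(z_0),
\end{equation*}
which is non-zero because at the regular point $z_0$ the vectors $r^\pm(z_0)$ are eigenvectors of $M(z_0)$ for the distinct eigenvalues $e^{\pm ip(z_0)}$.

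The main obstacle is the local analysis at a zero of $r^\pm$: one has to confirm that the residue of $\Omega_\pm$ there is an integer equal in absolute value to the order of vanishing, with the correct sign, so that the exponential of the integral produces exactly the cancelling zero. Once this local picture is established, the remaining assertions reduce to routine algebraic manipulations with $\sigma$ and the definitions.
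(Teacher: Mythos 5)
Your argument is correct in substance but takes a route genuinely different from the paper's. The paper first fixes the concrete representation \eqref{d:r-pm} for $r^\pm$, from which the explicit formula \eqref{le:f:Omega} for $\Omega_\pm$ follows; it then analyzes the poles of $\Omega_\pm$ at the zeros of $M_{12}$ (\Cref{pro:omega:poles}, combined with \Cref{le:b:zeros}), obtaining analyticity and non-vanishing of $V^\pm$ for this specific choice, and only afterwards proves independence of the choice of $r^\pm$. You instead work directly with arbitrary analytic $r^\pm$ and deduce analyticity and non-vanishing of $V^\pm$ from a local residue computation at the zeros of $r^\pm$, showing that the residue of $\Omega_\pm$ equals minus the order of vanishing. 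Your determinantal and independence arguments agree in substance with the paper's (\Cref{le:sum_of_Omegas} is exactly your identity $\Omega_++\Omega_-=-dW/W$, and the independence computation matches the paper's). The residue approach is arguably cleaner as it sidesteps the explicit eigenvector formulas; the paper's concrete computations, on the other hand, are reused elsewhere in Section~3, so its route is well adapted to the surrounding material.

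Two small imprecisions should be repaired. First, if $\Omega_\pm$ has a simple pole with residue $-n$ at $z^*$, then $\exp\bigl(\int_{z_0}^z\Omega_\pm\bigr)$ has a \emph{pole} of order $n$ there, not a zero as you wrote; it is this pole that cancels the order-$n$ zero of $r^\pm$ to yield an analytic, non-vanishing product. Second, the assertion that $l^\pm(z^*)\tilde r(z^*)\neq 0$ does not hold literally if $r^\mp$ also vanishes at $z^*$: then $l^\pm(z^*)=(r^\mp(z^*))^{\mathrm T}\sigma=0$, and the pairing is trivially zero. One must factor the common power of $(z-z^*)$ out of $l^\pm$ as well as $r^\pm$; the reduced pairing is $\pm\det(\tilde r^+\;\tilde r^-)$, which is nonzero at $z^*$ because $\tilde r^\pm(z^*)$ are eigenvectors of $M(z^*)$ for the distinct eigenvalues $e^{\pm ip(z^*)}$ and hence linearly independent. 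After this repair, the computed residue is still $-n$ and the rest of your argument goes through unchanged.
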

\noindent 
We call $V^\pm$ analytic eigenvectors of $M$ normalized at $z_0$.
\\
The facts that $V^\pm$  are independent of the choice of $r^\pm$ and 
satisfy~\eqref{eq:det}, are proved by means of the ideas used to check 
similar facts in the case of differential equations on $\R$, see, e.g., 
section 3 of chapter 5 in~\cite{Fe:93}. 
\subsection{The canonical curves}\label{sss:CC}
For $z\in\mathbb C$, we let $x=\mathrm{Re}\,z$, $y=\mathrm{Im}\,z$.
\\
A curve $\gamma\subset \mathbb C$ is called \textit{vertical} if, along $\gamma$, 
$x$ is a piecewise continuously differentiable function of $y$.
We say that $\gamma$ is infinite if along it $y$ increases from $-\infty$ to $\infty$.
\\
Let $R$ be a regular simply connected domain, and $z_0\in R$. We fix in $R$ an 
analytic branch of the complex momentum $p$.  Let $\gamma\subset R$ be a vertical 
curve, and let $z(y)$ be the point of $\gamma$ with the imaginary part equal to $y$.
This curve is called {\it canonical} with respect 
to the branch $p$ if, at all the points of $\gamma$ where 
$z'$ exists, one has
\begin{equation}\label{def:can}
\frac{d}{dy} \, \mathrm{Im}\,\int_{z_0}^{z(y)} p(z)\,dz > 0,
\quad\text{and}\quad 
\frac{d}{dy}\, \mathrm{Im}\,\int_{z_0}^{z(y)} (p(z)-\pi)\,dz< 0,
\end{equation}
and, at the points where $dz/dy$ is discontinuous,  these inequalities hold
for  the left and  right derivatives.
\subsection{The canonical domains}\label{sss:CD}
The definitions of the bounded and unbounded canonical domains are 
slightly different.
\subsubsection{Bounded canonical domains}
We call a domain horizontally connected if, for any its two points
having one and the same imaginary part, the straight line segment 
that connects them is contained in this domain.
\\
Let $K$ be a bounded regular horizontally connected domain,  
$p$ be a branch of the complex momentum analytic in $K$,    
and $z_1,\,z_2$ be two regular points of the boundary of $ K$.  
We call $K$ \textit{canonical} with respect to  $p$ if,  $\forall z\in K$, 
there is a curve $\gamma$  connecting  $z_1$ and $z_2$ in $K$, containing $z$
and canonical with respect to $p$. 
\subsubsection{Unbounded canonical domains}
If $M$ is a trigonometric polynomial, we consider the unbounded
canonical domains that contain infinite vertical curves.
\\
We call a domain horizontally bounded if $|\re z|$ stays bounded for all $z$ in it.
\\
Let $K\subset\C$ be an unbounded  regular, horizontally connected and horizontally 
bounded domain, let $p$ be a branch of the complex momentum analytic in it.
We call  the domain $K$ {\it canonical} with respect to  $p$ if for any 
$z\in K$ there is an infinite curve $\gamma\subset K$ canonical with respect to $p$  
and containing $z$. 
\subsection{Main theorems}
Below $K\subset \mathbb C$ is a  domain canonical with respect to a branch  $p$, 
and  $V^\pm$ are analytic eigenvectors of $M$ normalized at
$z_0\in K$ and corresponding to the eigenvalues $e^{\pm i p(z)}$.
\subsubsection{Locally uniform asymptotics} 
Let us recall that an asymptotic representation is locally uniform in a 
domain $D$ if it is uniform in any fixed compact subset of $D$.
\\
First, we describe locally uniform asymptotics of solutions to~\eqref{main}.
One has
\begin{theorem}\label{th:main:local} 
For sufficiently small $h$, in $K$ there exist $\Psi^\pm$,
two analytic solutions to~\eqref{main}, admitting  the  following locally 
uniform asymptotic representations : 
\begin{equation}\label{as:main}
\Psi^\pm(z)= e^{\pm\frac{i}{h}\int_{z_0}^zp(z)\,dz}
\left(V^\pm(z)+O(h)\right),
\quad h\to 0.
\end{equation}
\end{theorem}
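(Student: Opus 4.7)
The plan is to follow the complex WKB strategy for matrix difference equations, paralleling the scheme developed for the scalar Schr\"odinger case in~\cite{B-F:94, F-Shch:17}: perform a WKB ansatz, verify that the geometric-phase normalization of $V^\pm$ makes the leading order consistent, and then construct an exact solution by a contraction-mapping argument along a canonical curve.

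First I would substitute $\Psi^\pm(z) = e^{\pm\frac{i}{h}\int_{z_0}^z p(z')dz'}\, U^\pm(z,h)$ into~\eqref{main} to obtain $U^\pm(z+h) = N^\pm(z,h)\,U^\pm(z)$ with $N^\pm(z,h) = e^{\mp\frac{i}{h}\int_z^{z+h}p(z')dz'}\,M(z)$. Taylor expansion gives $N^\pm(z,h) = e^{\mp ip(z)}M(z) + O(h)$, whose leading part has eigenvalues $1$ and $e^{\mp 2ip(z)}$ with eigenvectors $V^\pm(z)$ and $V^\mp(z)$ respectively. Looking for $U^\pm = V^\pm + hW^\pm$, expanding the equation to order $h$, and projecting along $l^\pm(z)$, the Fredholm solvability condition for $W^\pm$ reduces to
\begin{equation*}
\frac{l^\pm(z)(V^\pm)'(z)}{l^\pm(z)V^\pm(z)} = \mp\frac{i}{2}p'(z),
\end{equation*}
which holds by the very definition $V^\pm = e^{\int\Omega_\pm}r^\pm$ together with~\eqref{def:omega}; this is the point for which the geometric phase was introduced and it unblocks the construction.

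Next I would turn the resulting inhomogeneous equation for $W^\pm$ into a fixed-point problem. Decomposing $W^\pm = a^\pm V^+ + b^\pm V^-$ diagonalizes the homogeneous part at leading order, with multipliers $1+O(h)$ in the $V^\pm$-direction and $e^{\mp 2ip(z)}(1+O(h))$ in the $V^\mp$-direction. For each scalar component I impose a ``boundary condition'' at that endpoint of a canonical curve $\gamma\subset K$ through $z$ where the associated exponential $|e^{\pm\frac{i}{h}\int p\,dz'}|$ or $|e^{\mp\frac{i}{h}\int(p-\pi)\,dz'}|$ is smallest, and rewrite the equation as $W^\pm = \mathcal{T}^\pm W^\pm + \mathcal{G}^\pm$ on a Banach space of bounded analytic vector-valued functions on $K$ equipped with a suitable exponential weight. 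The two inequalities in~\eqref{def:can} defining canonicity furnish exactly the strict monotonicity of $\im\int p\,dz$ and $\im\int(p-\pi)\,dz$ along $\gamma$ needed to show that $\mathcal{T}^\pm$ is a contraction; the fixed point is automatically analytic on $K$ (each iterate is), and the contraction estimate $\|W^\pm\|=O(1)$ yields $hW^\pm=O(h)$ locally uniformly on compact subsets.

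The main obstacle lies in this last step: a single argument must simultaneously control both modes $V^+$ and $V^-$ of $W^\pm$, and each mode demands iteration in the direction along $\gamma$ where its own exponential factor decays. The presence of both inequalities in~\eqref{def:can} (rather than just one) is precisely what permits a consistent choice of these directions throughout $K$, and it is here that the canonical condition plays its essential role. In the unbounded trigonometric-polynomial case one must additionally check that the series along the infinite canonical curve converge uniformly; this reduces to controlling the behavior of $M$ at $|\im z|\to\infty$, where hypothesis~\eqref{eq:d/a} is used.
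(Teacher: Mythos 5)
Your proposal follows the same route as the paper's proof: factor out the WKB exponential, use the geometric-phase normalization of $V^\pm$ to make the leading-order equation consistent (your Fredholm solvability computation is exactly the content of \Cref{le:V1V}, which shows the diagonal of $W(z)=V^{-1}(z+h)V(z)$ is $e^{\pm ihp'/2 + O(h^2)}$), and solve the residual equation by a contraction argument along a canonical curve using both inequalities in~\eqref{def:can}. The paper organizes this as the substitution $\Psi=\Psi_0 X$ leading to $X(z+h)=T(z)X(z)$ with the structure given in \Cref{pro:diag}, and then delegates the fixed-point analysis to the singular integral equation machinery of~\cite{F-Shch:17}; your plan captures the same algebraic reduction and the same analytic mechanism, differing only in presentation.
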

\subsubsection{Asymptotics in unbounded domains}
Here, we concentrate on the case where $K$ is an unbounded canonical domain, and
describe the behavior of the  solutions $\Psi^\pm(z)$ from 
\Cref{th:main:local}  for large $|\im z|$. 
\\
For a fixed $\delta>0$, we call the domain $K$ without the $\delta$-neighborhood 
of its boundary an admissible subdomain of $K$.
\begin{theorem}\label{th:main:global} Let, in the case of the previous theorem,
the domain $K$ be unbounded, and $A$ be its admissible subdomain.
For sufficiently large $Y$, for $|\im z |\ge Y$ the solutions $\Psi^\pm$ admit in $A$
the following uniform asymptotic representations :
\begin{equation}\label{as:main:global}
\Psi^\pm(z)= e^{\pm\frac{i}{h}\int_{z_0}^zp(z)\,dz+g}
\begin{pmatrix}V_1^\pm(z)\,(1+O(h))\\ 
V_2^\pm(z)\,(1+O(h))\end{pmatrix},
\quad h\to 0.
\end{equation}
Here $|g|\le C\,(1+|z|)\,h$ with a constant $C>0$ independent of $h$.
If $\frac{M_{22}(z)}{M_{11}(z)}\to 0$ as $|\im z|\to\infty$, then $|g|\le C\,h$.
\end{theorem}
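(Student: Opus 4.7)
My plan is to upgrade the locally uniform asymptotics of \Cref{th:main:local} to uniform asymptotics on the unbounded admissible subdomain $A$, absorbing the next-order WKB correction into the exponential factor $e^g$. On any bounded horizontal strip inside $A$, the conclusion of \Cref{th:main:local} already suffices; the new content is control as $|\im z|\to\infty$, which is made possible by the fact that $M$ is a trigonometric polynomial.

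First, I would extract the asymptotic behavior of $p$, $r^\pm$, and $\Omega_\pm$ at infinity. As $\im z\to -\infty$, the sum \cref{eq:M:poly} is dominated by the $j=l$ term; a branch analysis of $2\cos p=\tr M$, using $\tr M_l\ne 0$, yields $ip(z)=2\pi l z+c_\pm+O(e^{-2\pi|\im z|})$ on each branch, and there is a symmetric expansion at $+\infty$. Correspondingly $r^\pm(z)$ converge to the eigenvectors of the limiting matrix (modulo exponentially small errors), and, thanks to \cref{eq:d/a}, these limits have nonzero entries in both coordinates. Plugging these expansions into \cref{def:omega} shows that the meromorphic differential $\Omega_\pm$ has an integrable tail along a vertical canonical curve when $M_{22}/M_{11}\to 0$, and only a bounded tail otherwise.

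Next, writing $\Psi^\pm(z)=e^{\pm\frac{i}{h}\int_{z_0}^zp\,dz}W^\pm(z)$ turns \cref{main} into a functional equation for $W^\pm$ whose principal part, written in the $\{V^+,V^-\}$ frame, is close to a diagonal matrix with entries $1$ and $e^{\mp 2ip/h}$. The canonicity property \cref{def:can} guarantees that one of these two entries is contractive along an infinite vertical canonical curve. Iterating along such a curve from a base line $\im z=Y_0$ (where \Cref{th:main:local} supplies the initial datum) yields, via a Banach fixed point in weighted sup-norms, a unique $W^\pm$ of the form $W^\pm(z)=e^{g(z)}(V^\pm(z)+O(h))$ componentwise. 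The phase $g$ is $h$ times the integral of an explicit one-form along the canonical curve; from the asymptotics of the integrand obtained above, the length of the integration path gives $|g|\le Ch(1+|z|)$ in general, and $|g|\le Ch$ when the one-form is integrable at infinity, which happens precisely when $M_{22}/M_{11}\to 0$.

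The main obstacle, I expect, is setting up the contraction on the infinite strip. The equation \cref{main} shifts $z$ horizontally by $h$, whereas the canonical curves along which the exponentials $e^{\pm ip/h}$ are controlled are vertical; combining the inequalities \cref{def:can} with analyticity estimates and the horizontal boundedness of $A$ is needed to turn the one-step recurrence into a strict contraction uniformly on $A\cap\{|\im z|\ge Y\}$. A further delicate point is the componentwise multiplicative form $V_i^\pm(z)(1+O(h))$: this requires propagating the two coordinates of $W^\pm$ separately and using that neither $V_1^\pm$ nor $V_2^\pm$ vanishes at infinity, which in turn follows from the limit analysis of $r^\pm$ together with the non-degeneracy $M_{12}M_{21}\not\equiv 0$.
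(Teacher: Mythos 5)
Your high-level plan follows the paper's structure --- factor out $e^{\pm i\theta/h}$ and the eigenvector frame, extract the infinity asymptotics of $p$, $r^\pm$, $\Omega_\pm$, and then close a Banach fixed point on an unbounded canonical curve --- but it stops exactly at the two hard points rather than resolving them. First, you identify but do not surmount the mismatch between the \emph{horizontal} shift $z\mapsto z+h$ in \cref{main} and the \emph{vertical} canonical curves along which the oscillatory factors are controlled. The paper resolves this by introducing the singular Cauchy-kernel operator $\mathcal{L}_+$ of~\eqref{eq:operator-L} (and the conjugated operator $\mathcal{K}_+$ of~\eqref{eq:operator-K}), which inverts the difference operator by a contour integral along the canonical curve; the boundedness of $\mathcal{K}_+$ on the weighted spaces $H_{\gamma,a,b,c}$ in~\Cref{pro:LK} is precisely where canonicity of $\gamma$ enters. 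Simply iterating the one-step recurrence along the curve, as your sketch suggests, is not available because the recurrence moves in the wrong direction.

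Second, your proposal folds the factor $e^{g}$ into the same fixed-point argument, but the paper necessarily splits it off \emph{beforehand}. The diagonal entries of the transformed matrix are $T_{jj}(z)=e^{h^2 c_s+O(h^2 e^{-2\pi|y|})}$ (see~\eqref{eq:T:diag}); over a strip of height $\sim 1/h$ their cumulative product is $O(1)$ but not $1+O(h)$, so it cannot be absorbed as a small remainder in a contraction with target accuracy $O(h)$. The paper therefore first solves the scalar first-order equations $\phi_j(z+h)-\phi_j(z)=\ln T_{jj}(z)$ by the explicit integral formula~\eqref{formula:phij}, proves $|\phi_j|\le Ch(1+|y|)$ (and $\le Ch$ when $c_u=c_d=0$, i.e.\ when $M_{22}/M_{11}\to 0$) in~\Cref{le:phi}, gauges them away via $\mathcal X=\mathrm{diag}(e^{-\phi_1},e^{-\phi_2})X$, and only then runs the fixed point on the resulting anti-diagonal perturbation. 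The $g$ in the theorem is exactly this $\phi_1$ (for $\Psi^+$), not ``$h$ times the integral of a one-form along the canonical curve''; your heuristic that the growth of $g$ reflects whether the integrand decays at infinity is morally right, but the construction and the estimate run through~\eqref{formula:phij}--\eqref{est:phis}. Finally, the componentwise statement $V_j^\pm(z)(1+O(h))$ in~\eqref{as:main:global} is not obtained from mere non-vanishing of $V_j^\pm$; it uses the matched exponential rates of $V_j^+$, $V_j^-$, $e^{-2i\theta/h}$, and the second component $\mathcal X_2$ as in~\eqref{est:X2}--\eqref{est:V}.
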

\noindent In \eqref{as:main:global} the $O(h)$ decay exponentially  as $|y|\to\infty$
(for $\Psi^+$ see \eqref{eq:PsiX} and~\eqref{est:V}).
\subsubsection{Known results for  differential equations}\label{differential_eq}
For \cref{differential-eq}, for sufficiently small $h$ one  constructs vector solutions $\Psi_j$,
\ $j=1,2$, such that
\begin{equation}\label{as:dif-eq}
\Psi_{j}(x)\sim e^{\frac{i}{h}\int \limits_{x_0}^x p_j \,dx-
\int\limits_{x_0}^x \frac{l_j\; r_j'}{l_jr_j}\,dx}\,r_j, \quad h\to 0.
\end{equation}
where  $p_j$ are eigenvalues of $M$, and $l_j$ and $r_j$ 
are the corresponding left and right eigenvectors,  see section 4 of chapter 5
in \cite{Fe:93}, and we have written  only the leading terms of the asymptotics.\\
The expressions $-\int_{x_0}^x \frac{l_j\; r_j'}{l_jr_j}\,dx$, \  
$j=1,2$ are often called {\it geometric phases} or {\it Berry phases}  (see \cite{B:84}) 
and have a well-known geometric interpretation (see \cite{S:83}). 
\subsubsection{An example}
Let us consider  the scalar difference \cref{scalar-eq} with an analytic 
function $v$. A vector function $\Psi$ satisfies~\eqref{main} with the matrix
$
M(z)=\begin{pmatrix}
-v(z) & -1
\\
1 & 0 
\end{pmatrix}
$
if and only if $\Psi(z)=\begin{pmatrix}\psi(z) \\ \psi(z-h)\end{pmatrix}$ where
$\psi$ is a solution to~\eqref{scalar-eq}. 
\\
Let us  deduce the quasiclassical 
asymptotics of solutions to~\eqref{scalar-eq} from Theorem~\ref{th:main:local}.
\\
For the above $M(z)$,  the complex momentum is defined 
by the relation $2\cos p(z)+v(z)=0$, and  as eigenvectors of $M(z)$ one can choose 
\begin{equation*}
r^\pm=\begin{pmatrix}1\\e^{\mp ip(z)} \end{pmatrix}\quad
\text{and} \quad
l^\pm=\begin{pmatrix}e^{\pm ip(z)}&-1 \end{pmatrix}.
\end{equation*}
Then
\begin{multline*}
\int_{z_0}^z\Omega_\pm=\mp\frac{i}{2}\int_{z_0}^z\left(p'(s)+ 
\frac{2p'(s)e^{\mp ip(s)}}{e^{\pm ip(s)}-e^{\mp ip(s)}}\right) \,ds
\\
=-\frac{i}2\int_{z_0}^z 
\frac{p'(s)\left(e^{ip(s)}+e^{-ip(s)}\right)}{e^{ip(s)}-e^{-ip(s)}} \,ds
=-\left.\ln \sqrt{\sin p}\, \right|_{z_0}^{z}.
\end{multline*}
This leads to the following  formulas for two analytic solutions 
to \eqref{scalar-eq} :
\begin{equation}\label{scalar-eq:as}
  \psi^\pm(z)=\frac1{\sqrt{\sin p(z)}}e^{\pm\frac{i}h\int_{z_0}^zp(z)\,dz+O(h)}.
\end{equation}
These formulas were obtained in, e.g.,~\cite{F-Shch:18}.
\section{The meromorphic differentials $\Omega_\pm$}\label{s:gph}
\subsection{Preliminaries}
\subsubsection{The left and right eigenvectors of 
unimodular $2\times 2$-matrices}\label{ss:ev}
Here, we assume only that  $M\in SL(2,{\mathbb C})$. Then the eigenvalues of $M$ are of 
the form $e^{\pm i p}$, where $p$ is a complex number. We assume that  $e^{ip}\ne e^{-ip}$.
Let  $r^\pm$ be right eigenvectors of $M$ corresponding to the eigenvalues 
$e^{\pm i p}$. One has
\begin{lemma}\label{le:left-right} The row vectors defined by formula~\eqref{eq:left-right}
are left eigenvectors corresponding to the eigenvalues   $e^{\pm ip}.$
\end{lemma}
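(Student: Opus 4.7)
The plan is to reduce the statement to a single algebraic identity relating $\sigma$, $M$, and $M^{-1}$ that expresses the fact that $SL(2,\mathbb C)=Sp(2,\mathbb C)$. The key step I would establish first is

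\begin{equation*}
 \sigma M = (M^{-1})^{\mathrm T}\,\sigma, \qquad M\in SL(2,\mathbb C).
\end{equation*}

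This can be verified by a one-line computation: writing $M=\begin{pmatrix}a&b\\c&d\end{pmatrix}$ with $ad-bc=1$, one checks directly that both sides equal $\begin{pmatrix}-c&-d\\a&b\end{pmatrix}$, the point being that conjugation by $\sigma$ turns a $2\times 2$ matrix into its classical adjugate, which coincides with $M^{-1}$ under the unimodularity assumption.

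Once the identity is in hand, the verification of~\eqref{d:lev} is immediate. Plugging in the definition~\eqref{eq:left-right} of $l^\pm$ and using the identity yields
\begin{equation*}
 l^\pm(z) M(z) = (r^\mp)^{\mathrm T}(z)\,\sigma M(z) = (r^\mp)^{\mathrm T}(z)\,(M^{-1}(z))^{\mathrm T}\,\sigma = \bigl(M^{-1}(z)\,r^\mp(z)\bigr)^{\mathrm T}\sigma.
\end{equation*}
Since $r^\mp$ is a right eigenvector of $M$ with eigenvalue $e^{\mp ip}$, it is also an eigenvector of $M^{-1}$ with reciprocal eigenvalue $e^{\pm ip}$, so $M^{-1}r^\mp = e^{\pm ip}\,r^\mp$. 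Substituting gives $l^\pm M = e^{\pm ip}\,(r^\mp)^{\mathrm T}\sigma = e^{\pm ip}\,l^\pm$, which is precisely~\eqref{d:lev}.

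There is essentially no obstacle here: the lemma is a purely algebraic statement about $2\times 2$ unimodular matrices, and the argument is insensitive to whether $p$, $r^\pm$ depend on $z$ (the dependence simply rides along). A slightly more conceptual alternative, which I would mention as a remark rather than spell out, is to observe that $u^{\mathrm T}\sigma v=-\det(u\;v)$, so $l^\pm(z) M(z) v = -\det(r^\mp, Mv) = -\det(M^{-1}r^\mp,v)$ since $\det M =1$, and then the eigenvector relation for $M^{-1}$ finishes the proof for every $v\in\mathbb C^2$.
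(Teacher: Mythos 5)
Your proof is correct and follows essentially the same route as the paper: both rest on the identity $\sigma M = (M^{-1})^{\mathrm T}\sigma$ for $M\in SL(2,\mathbb C)$, then transpose and use that $r^\mp$ is an eigenvector of $M^{-1}$ with eigenvalue $e^{\pm ip}$. The paper simply asserts the identity without the explicit $2\times 2$ verification you supply, and your determinant-based remark is a nice (but not needed) alternative viewpoint.
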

\begin{proof}
As $M\in SL(2,\mathbb{C})$, we have
\begin{equation*}
\sigma\,M=\left(M^{-1}\right)^{\mathrm T}\sigma.
\end{equation*}
Hence
\begin{equation*}
l^\mp \,M= {r^\pm}^{\mathrm T}\sigma M
={r^\pm}^{\mathrm T} \left(M^{-1}\right)^{\mathrm T}\sigma 
=\left(M^{-1} r^\pm\right)^{\mathrm T}\sigma
=e^{\mp ip}{r^\pm}^{\mathrm T}\sigma=e^{\mp ip} \,l^\mp.
\end{equation*}
\end{proof}
\noindent
Below, $l^\pm$ are always defined by \eqref{eq:left-right}.
\Cref{le:left-right} implies 
\begin{lemma}\label{le:det}
One has
\begin{equation}\label{eq:det:1}
 l^\pm r^\pm\,= \pm \det (r^+\;r^-),\quad \text{and}\quad l^\pm r^\mp=0,
\end{equation}
where $(r^+\;r^-)$ is the matrix with the columns $r^+$ and $r^-$.
\end{lemma}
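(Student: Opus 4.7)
The plan is to derive both identities by straightforward linear algebra, exploiting the symplectic structure encoded in $\sigma$. The essential observation is the following elementary identity: for any two column vectors $u,v\in\mathbb{C}^2$, direct computation with their coordinates gives $u^{\mathrm T}\sigma v = -\det(u\;v) = \det(v\;u)$. In particular the bilinear form $(u,v)\mapsto u^{\mathrm T}\sigma v$ is antisymmetric.

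With this identity in hand, the first claim $l^\pm r^\pm = \pm\det(r^+\;r^-)$ becomes immediate from the definition $l^\pm = (r^\mp)^{\mathrm T}\sigma$. Indeed, $l^+ r^+ = (r^-)^{\mathrm T}\sigma\, r^+ = \det(r^+\;r^-)$, whereas $l^- r^- = (r^+)^{\mathrm T}\sigma\, r^- = \det(r^-\;r^+) = -\det(r^+\;r^-)$.

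For the orthogonality $l^\pm r^\mp = 0$ I would offer either of two short arguments. The algebraic one is instantaneous: by antisymmetry of $u^{\mathrm T}\sigma v$, one has $(r^\mp)^{\mathrm T}\sigma\, r^\mp = 0$, which is exactly $l^\pm r^\mp$. The more conceptual route uses \Cref{le:left-right}: since $l^+ M = e^{ip} l^+$ and $M r^- = e^{-ip} r^-$, computing $l^+ M r^-$ two ways yields $(e^{ip}-e^{-ip})\, l^+ r^- = 0$, and the standing hypothesis $e^{ip}\neq e^{-ip}$ forces $l^+ r^- = 0$; the identity $l^- r^+ = 0$ follows in the same way.

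There is no real obstacle: the lemma is a direct algebraic consequence of the definition of $l^\pm$ together with $\sigma^{\mathrm T}=-\sigma$, and the non-degeneracy $e^{ip}\neq e^{-ip}$ is used (if at all) only for the second identity.
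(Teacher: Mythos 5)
Your proof is correct and takes essentially the same route as the paper's: both reduce everything to the identity $u^{\mathrm T}\sigma v=-\det(u\;v)$ and the resulting antisymmetry of $(u,v)\mapsto u^{\mathrm T}\sigma v$. The extra remark deriving $l^\pm r^\mp=0$ from \Cref{le:left-right} and $e^{ip}\neq e^{-ip}$ is a valid alternative, but the paper, like your primary argument, handles it by the same determinant computation.
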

\begin{proof} One has
$$l^\pm r^\pm={r^\mp}^{\mathrm T}\sigma r^\pm
=-(\sigma r^\mp)^{\mathrm T}r^\pm
=-\left(\sigma r^\mp,\,r^\pm\right)_{\mathbb{R}^2}
=-\det(r^\mp\;r^\pm)=\pm \det (r^+\;r^-).
$$
This proves the first two equalities. The remaining  two
are proved similarly.
\end{proof}
\noindent \Cref{le:det} can be equivalently formulated in the following form.
Let us denote by $\begin{pmatrix}l^+\\ -l^-\end{pmatrix}$ the matrix 
with the rows $l^+$ and $-l^-$. One has
\begin{corollary}\label{cor:det}
\begin{equation}\label{eq:det:matrix}
  \begin{pmatrix}l^+\\ -l^-\end{pmatrix}\,(r^+\;r^-)=\det (r^+\;r^-)\cdot I.
\end{equation}
\end{corollary}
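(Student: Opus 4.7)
The plan is to obtain the corollary as a direct reformulation of \Cref{le:det}, since that lemma already lists the values of all four scalar inner products $l^\pm r^\pm$ and $l^\pm r^\mp$ that will appear as the entries of the matrix product on the left-hand side of \eqref{eq:det:matrix}.

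More concretely, I would simply expand the matrix product. The $(i,j)$ entry of $\begin{pmatrix}l^+\\ -l^-\end{pmatrix}\,(r^+\;r^-)$ is the row vector from position $i$ applied to the column from position $j$. Using \eqref{eq:det:1}, the diagonal entries come out to $l^+ r^+ = +\det(r^+\;r^-)$ and $(-l^-)r^- = -(-\det(r^+\;r^-)) = \det(r^+\;r^-)$, so both diagonal entries agree with the common scalar $\det(r^+\;r^-)$. The off-diagonal entries are $l^+ r^- = 0$ and $(-l^-)r^+ = 0$ by the second identity in \eqref{eq:det:1}. Reading the four entries together yields $\det(r^+\;r^-)\cdot I$.

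There is no real obstacle here: the corollary is nothing more than arranging the four identities of \Cref{le:det} into a matrix equation, with the sign $-$ in front of $l^-$ being exactly what is needed to flip the $-\det(r^+\;r^-)$ coming from $l^- r^-$ into $+\det(r^+\;r^-)$ on the diagonal. The proof therefore reduces to one short display writing out the $2\times 2$ matrix product entry by entry and substituting the values from \eqref{eq:det:1}.
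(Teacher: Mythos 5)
Your proof is correct and is exactly what the paper intends: the corollary is presented as an equivalent reformulation of \Cref{le:det}, so expanding the $2\times2$ product entrywise and substituting the four identities from \eqref{eq:det:1} is precisely the argument. Nothing more is needed.
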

\subsubsection{Analytic solutions to~\cref{d:rev} and 
differentials $\Omega_\pm$}\label{ss:analytic-ev}
Let us come back to~\eqref{main}.
Let $R\subset D$ be a simply connected regular domain, 
and let $p$ be a branch of the complex momentum analytic in $R$.
\\
By \Cref{th:V}, up to constant factors,   the vectors $V_\pm(z)$ are independent 
of the choice of $r_\pm$, analytic solutions  to~\eqref{d:rev}, used to construct 
them.  Throughout this paper  $r^\pm$ are vectors given by the formulas:
\begin{equation}\label{d:r-pm}
   r^\pm(z)=\begin{pmatrix} M_{12}(z) \\ e^{\pm ip(z)}-M_{11}(z) \end{pmatrix}.
\end{equation}  
\noindent One has
\begin{equation}
  \label{f:det}
  \det( r^+(z)\ \  r^-(z))=-2iM_{12}(z)\sin p(z).
\end{equation}
As $R$ is regular, $\sin p(z)\ne 0$. So, the determinant 
vanishes only at zeros of $M_{12}$.
\\ 
Actually, one has
\begin{lemma}\label{le:b:zeros}
If $M_{12}(z)=0$ at $z\in R$, then one and only one of the vectors $ r^\pm(z)$ equals zero.   
\end{lemma}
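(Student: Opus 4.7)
My plan is to exploit the very simple structure of the vectors $r^\pm$ when $M_{12}(z)=0$. Under that condition, the first entries of both $r^\pm(z)$ vanish identically, so whether $r^\pm(z)=0$ is decided purely by whether the second entry $e^{\pm ip(z)}-M_{11}(z)$ vanishes. The whole task thus reduces to showing that exactly one of the two quantities $e^{\pm ip(z)}$ coincides with $M_{11}(z)$.

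To that end, first I would note that $M_{11}(z)\ne 0$: otherwise, together with $M_{12}(z)=0$, the first row of $M(z)$ would be zero, contradicting $\det M(z)=1$. Next I would use $\det M(z)=1$ with $M_{12}(z)=0$ to conclude $M_{11}(z)M_{22}(z)=1$, and then use the defining relation~\eqref{eq:p} for the complex momentum to write
\begin{equation*}
M_{11}(z)+\frac{1}{M_{11}(z)}=M_{11}(z)+M_{22}(z)=\tr M(z)=e^{ip(z)}+e^{-ip(z)}.
\end{equation*}
This is a quadratic equation for $M_{11}(z)$ whose two roots are precisely $e^{ip(z)}$ and $e^{-ip(z)}$, so $M_{11}(z)\in\{e^{ip(z)},e^{-ip(z)}\}$.

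Finally, since $R$ is regular, $z$ is not a turning point, so $e^{ip(z)}\ne e^{-ip(z)}$. Therefore $M_{11}(z)$ coincides with exactly one of these two values, which makes exactly one of $e^{\pm ip(z)}-M_{11}(z)$ vanish and the other nonzero. Combined with the vanishing first entries, this yields that exactly one of $r^\pm(z)$ equals zero, as claimed.

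The argument is essentially pure linear algebra and I do not anticipate any real obstacle; the only point worth being careful about is using regularity of $z$ to rule out the coincidence $e^{ip(z)}=e^{-ip(z)}$, which is what forces the ``one and only one'' dichotomy rather than merely ``at least one''.
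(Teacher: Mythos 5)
Your argument is correct and coincides with the paper's own proof in substance: the paper observes directly that, since $M_{12}(z)=0$, the diagonal entries $M_{11}(z)$ and $M_{22}(z)$ are the eigenvalues $e^{\pm ip(z)}$ of $M(z)$, and regularity forces them to differ; you recover the same fact by solving the quadratic $M_{11}+1/M_{11}=2\cos p$, which is a slightly more computational route to the identical conclusion.
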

\begin{proof}
As $M_{12}(z)=0$, the numbers $M_{11}(z)$ and $M_{22}(z)$ are eigenvalues of $M(z)$.
As $z$ is regular, $M_{11}(z)\ne M_{22}(z)$.
So, either $M_{11}(z)=e^{ip(z)}$ and $M_{11}(z)\ne e^{-ip(z)}$ or
 $M_{11}(z)=e^{-ip(z)}$ and $M_{11}(z)\ne e^{ip(z)}$. This and~\eqref{d:r-pm} imply
the statement.
\end{proof}
\noindent 
Now, we define two row vectors $ l^\pm (z)$  by the formula~\eqref{eq:left-right}. One has
\begin{equation}\label{d:l-pm}
   l^\pm (z)=\left( e^{\mp i p(z)}-M_{11}(z)\ \  -M_{12}(z)\right).
\end{equation}
Let us compute the differentials $\Omega_\pm$ corresponding 
to the chosen $r^\pm$.
\\ 
For our choice of $ r^\pm$, formulas~\eqref{cor:det} and~(\ref{f:det}) imply that
\begin{equation}\label{f:denominator}
 l^\pm(z) r^\pm(z)=\mp 2iM_{12}(z)\sin p(z).
\end{equation}
Using~\eqref{f:denominator},~\eqref{d:r-pm},~\eqref{d:l-pm}, and the definitions 
of $ \Omega_\pm$, see~(\ref{def:omega}), we prove that
\begin{equation}\label{le:f:Omega}
   \Omega_\pm(z)=\mp\frac{idp(z)}2\;
\pm\frac{\left(e^{\mp ip(z)}-M_{11}(z)\right)\, d\,\ln M_{12}(z)- 
d\,\left(e^{\pm ip(z)}-M_{11}(z)\right)}{2i\sin p(z)}.
\end{equation}
In the rest of this paper, $\Omega_\pm$ are the differentials given by these formulas.
\subsection{Differentials $ \Omega_\pm$ in regular domains}
Let, again, $R$ be  a regular domain. 
\\
By~\eqref{le:f:Omega},  \ $\Omega_\pm$ can have poles in $R$ only at 
the points where $M_{12}(z)=0$ (as $e^{\pm i p(z)}$ differ in $R$). 
Let $z_0\in R$, and  let $M_{12}(z_0)\ne 0$. One has
\begin{proposition} \label{pro:omega:poles}
Pick $s\in\{\pm\}$.  Let $z_*\in R$ and $M_{12}(z_*)=0$. 
If $r^s(z_*)\ne 0$, then $ \Omega_s$ is holomorphic at $z_*$.
If $r^s(z_*)=0$, then the function $z\mapsto 
\exp\left(\int_{z_0}^z \Omega_s(z)\right)\, r^s(z)$
is analytic and does not vanish at $z_*$. In this case,
in a neighborhood of $z_*$, one has 
\begin{equation}\label{eq:omega:near-a-zero-of-b}
   \Omega_s(z)= -d\ln M_{12}(z)+ \text{a holomorphic differential}.
\end{equation}
\end{proposition}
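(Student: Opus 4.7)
The plan is to exploit the algebraic identity forced by $\det M\equiv 1$ together with $\tr M=2\cos p$, namely
\[
(e^{ip(z)}-M_{11}(z))(e^{-ip(z)}-M_{11}(z))=M_{12}(z)M_{21}(z),
\]
which I would verify directly by expanding and using $M_{11}M_{22}-M_{12}M_{21}=1$ and $M_{11}+M_{22}=e^{ip}+e^{-ip}$. This single identity is the engine of the whole proof: it controls how the two ``eigenvalue differences'' $e^{\pm ip}-M_{11}$ vanish at a zero of $M_{12}$, and which of them must vanish there.

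For the first case, $r^s(z_*)\ne 0$ with $M_{12}(z_*)=0$, the formula for $r^s$ shows that $e^{sip(z_*)}-M_{11}(z_*)\ne 0$. The identity above then lets me write $e^{-sip(z)}-M_{11}(z)=M_{12}(z)\,q(z)$ near $z_*$ for some $q$ holomorphic at $z_*$. Substituting into~\eqref{le:f:Omega}, the only potentially singular term---the one containing $d\ln M_{12}=dM_{12}/M_{12}$---becomes $\frac{s\,q(z)\,dM_{12}(z)}{2i\sin p(z)}$, which is holomorphic at $z_*$. All remaining summands of $\Omega_s$ are manifestly holomorphic there since regularity of $R$ gives $\sin p(z_*)\ne 0$; so $\Omega_s$ is holomorphic at $z_*$.

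For the second case, $r^s(z_*)=0$, one has $M_{11}(z_*)=e^{sip(z_*)}$, so $e^{-sip(z_*)}-M_{11}(z_*)=e^{-sip(z_*)}-e^{sip(z_*)}$ is nonzero by regularity, and the identity now yields the symmetric factorization $e^{sip(z)}-M_{11}(z)=M_{12}(z)\,\tilde q(z)$ with $\tilde q$ holomorphic near $z_*$. Plugging both factorizations into~\eqref{le:f:Omega} and evaluating the coefficient $\frac{s(e^{-sip(z_*)}-M_{11}(z_*))}{2i\sin p(z_*)}$, one finds it equals $-1$ for both choices $s=\pm$. Hence
\[
\Omega_s(z)=-\frac{dM_{12}(z)}{M_{12}(z)}+\omega(z),
\]
with $\omega$ holomorphic at $z_*$, which is~\eqref{eq:omega:near-a-zero-of-b}.

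It remains to verify the analyticity and non-vanishing of $V^s$ at $z_*$. Exponentiating the previous display along a path from $z_0$ ending near $z_*$ gives $\exp\!\bigl(\int_{z_0}^z\Omega_s\bigr)=c\,e^{F(z)}/M_{12}(z)$ with $c\ne 0$ a constant and $F$ holomorphic at $z_*$ (the multivaluedness of the logarithm disappears after exponentiation). Combining with the factorization $r^s(z)=M_{12}(z)\bigl(1,\,\tilde q(z)\bigr)^{\mathrm T}$, the pole $1/M_{12}$ cancels the zero of $r^s$ exactly and produces $c\,e^{F(z)}\bigl(1,\,\tilde q(z)\bigr)^{\mathrm T}$, analytic and non-zero at $z_*$. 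The main obstacle is really one of bookkeeping: checking that the coefficient of the singular piece evaluates to exactly $-1$ uniformly in $s=\pm$; once that is in place, everything else reduces to substitution and an appeal to the identity in the first step.
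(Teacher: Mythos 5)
Your argument is essentially the paper's own proof, reorganized slightly: the key identity $(e^{ip}-M_{11})(e^{-ip}-M_{11})=-M_{12}M_{21}$ (note the sign, which you dropped but which is harmless here since you only use it to factor out $M_{12}$), the resulting holomorphic factorizations $e^{\mp sip}-M_{11}=M_{12}\cdot(\text{hol.})$, the residue $-1$ for the $d\ln M_{12}$ coefficient, and the cancellation of the $1/M_{12}$ pole of $\exp\left(\int\Omega_s\right)$ against the simple $M_{12}$-zero of $r^s$ are exactly the steps in the paper's proof of~\Cref{pro:omega:poles}. One small point worth spelling out (both you and the paper are terse here): to pass from ``the coefficient of $d\ln M_{12}$ equals $-1$ at $z_*$'' to~\eqref{eq:omega:near-a-zero-of-b}, one must observe via the factorization $e^{sip}-M_{11}=M_{12}\tilde q$ that the coefficient is $-1+\dfrac{sM_{12}\tilde q}{2i\sin p}$, so that the remainder times $d\ln M_{12}=dM_{12}/M_{12}$ is genuinely holomorphic, not merely $o(1)$.
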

\noindent
This proposition and \Cref{le:b:zeros} give quite a complete
description of the poles of $\Omega_\pm$ in a regular domain.
\begin{proof}For the sake of definiteness, we assume that $s=+$. 
By~\eqref{le:f:Omega}, one has   
\begin{equation}\label{eq:poles:1}\textstyle
  \Omega_+(z)= \frac{e^{-ip(z)}-M_{11}(z)}{2i\sin p(z)}d\ln M_{12}(z)+
\text{(a holomorphic differential)}, \quad z\sim z_*.
\end{equation}
First, let us assume that $r^+(z_*)\ne 0$. Then, by \Cref{le:b:zeros}, 
$r^-(z_*)=0$, and, therefore, by~(\ref{d:r-pm})  one has 
$M_{11}(z_*)=e^{-ip(z_*)}\ne e^{ip(z_*)}$. 
But
\begin{equation}\label{eq:poles:2}
  \frac{e^{-ip(z)}-M_{11}(z)}{M_{12}(z)}=-\frac{M_{21}(z)}{e^{ip(z)}-M_{11}(z)}.
\end{equation}
Indeed, as $2\cos p(z)=M_{11}(z)+M_{22}(z)$, and 
$M_{11}(z)M_{22}(z)-M_{12}(z)M_{21}(z)=1$, 
\begin{equation*}
  (e^{-ip(z)}-M_{11}(z))(e^{ip(z)}-M_{11}(z))=
1-2\cos p(z) M_{11}(z)+M_{11}(z)^2=-M_{12}(z)M_{21}(z).
\end{equation*}
Formulas~\eqref{eq:poles:1} and~\eqref{eq:poles:2} imply that $ \Omega_+$ 
is holomorphic  at $z_*$. 
\\
Let us assume that $ r^+(z_*)=0$. By~\eqref{d:r-pm}  one has  
$M_{11}(z_*)=e^{ip(z_*)}\ne e^{-ip(z_*)}$. So, 
\begin{equation*}
  \frac{e^{-ip(z)}-M_{11}(z)}{2i\sin p(z)}= 
\frac{e^{-ip(z)}-e^{ip(z)}}{2i\sin p(z)}+o(1)= -1+o(1)
\quad \text{as}\quad z\to z_*.
\end{equation*}
Therefore~(\ref{eq:poles:1}) implies that, in a neighborhood of $z_*$, 
one has \eqref{eq:omega:near-a-zero-of-b}.
This and~\eqref{d:r-pm} imply that up to a non-vanishing analytic factor
\begin{equation}\label{eq:poles:3}
  \exp\left(\int_{z_0}^z \Omega_+(z)\right)\,  r^+(z)\sim 
  \begin{pmatrix}
    1\\ \frac{e^{ip(z)}-M_{11}(z)}{M_{12}(z)}
  \end{pmatrix},\quad z\sim z_*.
\end{equation}
Now, the analyticity of $z\mapsto \exp\left(\int_{z_0}^z\Omega_+(z)\right)\,  r^+(z)$
follows from the equality
\begin{equation*}
  \frac{e^{ip(z)}-M_{11}(z)}{M_{12}(z)}=-\frac{M_{21}(z)}{e^{-ip(z)}-M_{11}(z)}
\end{equation*}
that is equivalent to~\eqref{eq:poles:2}. The fact that the right hand side in
\eqref{eq:poles:3} does not vanish at $z_*$ is obvious.
This completes the proof of~\Cref{pro:omega:poles}.
\end{proof}
\noindent
Finally, we check
\begin{lemma}\label{le:sum_of_Omegas} In $R$ one has
  \begin{equation}
    \label{eq:sum_of_Omegas}
     \Omega_++\Omega_-=-d \ln\det( r^+\, r^-).
  \end{equation}
\end{lemma}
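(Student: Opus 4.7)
The key observation is that the two $\mp\frac{i}{2}dp$ terms in the definitions of $\Omega_+$ and $\Omega_-$ cancel when added, so the claim reduces to an identity purely about the derivatives of the right eigenvectors. Using \Cref{le:det}, I would write $W:=\det(r^+\;r^-)$ and record that $l^+r^+=W$ while $l^-r^-=-W$. Then by the definition~\eqref{def:omega},
\begin{equation*}
\Omega_++\Omega_-=-\frac{l^+\,dr^+}{W}+\frac{l^-\,dr^-}{W}
=\frac{l^-\,dr^--l^+\,dr^+}{W}.
\end{equation*}
So the lemma is equivalent to the differential identity $l^+\,dr^+-l^-\,dr^-=dW$.

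To prove this identity I would expand $dW=d\det(r^+\;r^-)=\det(dr^+\;r^-)+\det(r^+\;dr^-)$ and convert the two $2\times 2$ determinants into the pairings that define $l^\pm$. Since $\sigma^{\mathrm T}=-\sigma$, one has the scalar identity $u^{\mathrm T}\sigma v=-\det(u\;v)$ for any column vectors $u,v\in\mathbb C^2$; using the transposition trick $u^{\mathrm T}\sigma v=(u^{\mathrm T}\sigma v)^{\mathrm T}=-v^{\mathrm T}\sigma u$, I can rewrite
\begin{equation*}
\det(dr^+\;r^-)=-(dr^+)^{\mathrm T}\sigma\, r^-=(r^-)^{\mathrm T}\sigma\, dr^+=l^+\,dr^+,
\end{equation*}
and similarly $\det(r^+\;dr^-)=-(r^+)^{\mathrm T}\sigma\,dr^-=-l^-\,dr^-$, where in the last step I used the defining formula~\eqref{eq:left-right}, namely $l^+=(r^-)^{\mathrm T}\sigma$ and $l^-=(r^+)^{\mathrm T}\sigma$. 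Adding the two contributions gives $dW=l^+\,dr^+-l^-\,dr^-$, which is exactly what was needed.

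Combining the two steps yields
\begin{equation*}
\Omega_++\Omega_-=-\frac{dW}{W}=-d\ln\det(r^+\;r^-),
\end{equation*}
which is~\eqref{eq:sum_of_Omegas}. There is no real obstacle here; the only subtlety is keeping track of signs when moving between $\sigma$-pairings and $2\times 2$ determinants, and making sure that the sign discrepancy $l^+r^+=+W$ versus $l^-r^-=-W$ is correctly paired with the identity $dW=l^+dr^+-l^-dr^-$, so that the two sign conventions reinforce rather than cancel. Once this bookkeeping is done, the identity is essentially a one-line consequence of Leibniz's rule for the determinant together with Lemmas~\ref{le:left-right} and~\ref{le:det}.
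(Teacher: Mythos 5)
Your proof is correct, but it follows a genuinely different route from the paper's. The paper's proof is a one-line reference to the explicit formula~\eqref{le:f:Omega} for $\Omega_\pm$ and the explicit determinant~\eqref{f:det}: when you add the two expressions in~\eqref{le:f:Omega}, the $\mp\frac{i}{2}dp$ terms cancel, the remaining terms combine to $-d\ln M_{12} - d\ln\sin p$, and this is $-d\ln(-2iM_{12}\sin p)=-d\ln\det(r^+\;r^-)$ by~\eqref{f:det}. You instead argue abstractly from the definition~\eqref{def:omega} and from \Cref{le:det}: after cancelling the $dp$ terms, you reduce the lemma to the Leibniz-type identity $dW=l^+\,dr^+-l^-\,dr^-$ for $W=\det(r^+\;r^-)$, and you establish that identity via the $\sigma$-pairing $u^{\mathrm T}\sigma v=-\det(u\;v)$ together with $\sigma^{\mathrm T}=-\sigma$ and the defining relations $l^+=(r^-)^{\mathrm T}\sigma$, $l^-=(r^+)^{\mathrm T}\sigma$. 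Both computations are sound. The paper's route is shorter given that~\eqref{le:f:Omega} and~\eqref{f:det} are already on the page, but it is tied to the particular choice of $r^\pm$ in~\eqref{d:r-pm}. Your argument is structural and choice-independent: it proves the identity for any pair of analytic eigenvectors $r^\pm$ with $l^\pm r^\pm\not\equiv0$, which makes it a somewhat cleaner way to see why the lemma is true (and why it is consistent with the invariance statement in \Cref{th:V}). The one bookkeeping point you flag — that $l^+r^+=+W$ while $l^-r^-=-W$, and that this must match the signs in $dW=l^+dr^+-l^-dr^-$ — is exactly the place one could slip, and you handle it correctly.
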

\begin{proof} 
The statement follows from~\eqref{le:f:Omega} and~(\ref{f:det}).
\end{proof}
\subsection{Proof of \Cref{th:V}}
\label{s:proof:thm:Omegas}
Let $z_0\in R$ and $r_\pm(z_0)\ne 0$. Then, $M_{12}(z_0)\ne 0$, and  
$\det( r^+(z_0)\, r^-(z_0))\ne 0$.
\\
Let us construct $V^\pm$  in terms of $ r^\pm$ defined by~(\ref{d:r-pm}).
As in $R$ the differentials  $\Omega_\pm$ have poles only at zeros of $M_{12}$
(see~\eqref{le:f:Omega}), then, in view of Proposition~\ref{pro:omega:poles}, $V^\pm$ 
are analytic in $R$. As, outside the set of zeros of $M_{12}$, \  $r_\pm$ do not vanish and 
$\Omega_\pm$ are holomorphic, the same proposition implies that  $V^\pm$ 
do not vanish in  $R$. 
\\
Let us check \eqref{eq:det}. Near  $z_0$ the  $\Omega_\pm$ are 
holomorphic, and, using  \Cref{le:sum_of_Omegas}, we get
\begin{equation*}
  \det( V^+(z)\; V^-(z))=
\exp\left(\int_{z_0}^z\Omega_++\Omega_-\right)\det (r^+(z)\,r^-(z))
=\det (r^+(z_0)\;r^-(z_0)).
\end{equation*}
This is formula~\eqref{eq:det}. It is valid  in the whole domain $R$
as $\det(  V^+\; V^+)$ is analytic.
\\
Now, let us  check that in $R$  any   analytic eigenvectors $\tilde V^\pm$ normalized at $z_0$
coincide with $ V^\pm$ up to constant factors. For this, we consider $\tilde r^\pm$, two nontrivial  
solutions to~(\ref{d:rev}) analytic in $R$ and such that ${\rm det}\,(\tilde r^+(z_0)\;\tilde r^-(z_0))\ne 0$. 
In terms of  $\tilde r^\pm$, we  define $\tilde l^\pm$, 
$\tilde \Omega_\pm$ and  $\tilde V^\pm$ as we defined  $l^\pm$, $\Omega_\pm$ 
and  $V^\pm$ in terms of $r^\pm$. One has
\begin{equation}
  \tilde r^\pm(z)=c^\pm(z) r^\pm(z), \quad z \in R,
\end{equation}
where $c^\pm$ are nontrivial functions meromorphic in $R$.
Clearly, $c^\pm$ can vanish only at points where $\tilde r^\pm$ vanish,
and  $c^\pm$ can have poles  only at points where $r^\pm$ vanish.
So, near  $z_0$ the functions $c^\pm$ are analytic, do not vanish, and one has
\begin{equation*}
\int_{z_0}^z \frac{\tilde l^\pm\; d\,\tilde r^\pm}
{\tilde l^\pm \tilde r^\pm}=\int_{z_0}^z \frac{l^\pm \;d\,r^\pm}
{l^\pm\;r^\pm}+\left.\ln c^\pm\right|_{z_0}^{z},
\end{equation*}
\begin{equation*}
e^{ \int_{z_0}^z\tilde \Omega_\pm}\tilde r^\pm(z)=e^{ \int_{z_0}^z \Omega_\pm
-\left.\ln{c^\pm}\right|_{z_0}^{z}} c^\pm(z) r^\pm(z)
=c^\pm(z_0)e^{ \int_{z_0}^z \Omega_\pm}r^\pm(z).
\end{equation*}
This implies that  in $R$, one has $\tilde V^\pm=c^\pm(z_0) V^\pm$, i.e., any  
analytic eigenvectors normalized at $z_0$ coincide with $ V^\pm$ up to 
constant factors. 
\\
Finally, as $\tilde V^\pm(z)=c^\pm(z_0) V^\pm(z)$ and $\tilde r^\pm(z)=c^\pm(z) r^\pm(z)$, 
formula~\eqref{eq:det} valid for $V^\pm$ is valid also for $\tilde V^\pm$.
This  completes the proof of \Cref{th:V}.
\qed
\begin{corollary}[from the proof of \Cref{th:V}]
Let $\tilde r^\pm$ be nontrivial analytic solutions to~\cref{d:rev} and $\tilde  l^\pm$
be constructed by formula~\eqref{eq:left-right}.
Then  $z\mapsto\tilde l^\pm(z) \tilde r^{\pm}(z)$ are nontrivial analytic functions.
\end{corollary}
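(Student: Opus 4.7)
The plan is to reduce the statement to the pointwise identity from \Cref{le:det}. First, observe that the analyticity part is immediate: by the definition~\eqref{eq:left-right}, $\tilde l^\pm(z) = (\tilde r^\mp(z))^{\mathrm T}\sigma$, which is analytic in $R$ since $\tilde r^\mp$ is. Hence $z \mapsto \tilde l^\pm(z)\,\tilde r^\pm(z)$ is a holomorphic scalar function on $R$.

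Next, I would observe that the computation in \Cref{le:det} was carried out for an arbitrary $M \in SL(2,\mathbb{C})$ satisfying $e^{ip} \ne e^{-ip}$ with $r^\pm$ right eigenvectors for $e^{\pm ip}$ and $l^\pm$ defined by~\eqref{eq:left-right}. These hypotheses hold pointwise for each $z \in R$ with $\tilde r^\pm(z), \tilde l^\pm(z)$ in place of $r^\pm, l^\pm$ (regularity of $R$ gives $e^{ip(z)} \ne e^{-ip(z)}$). Consequently one obtains
\begin{equation*}
\tilde l^\pm(z)\,\tilde r^\pm(z) \;=\; \pm \det\bigl(\tilde r^+(z)\ \tilde r^-(z)\bigr), \qquad z\in R.
\end{equation*}

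So it suffices to show that the analytic function $z \mapsto \det(\tilde r^+(z)\ \tilde r^-(z))$ is not identically zero in $R$. Since $\tilde r^+$ and $\tilde r^-$ are nontrivial and analytic in $R$, their zero sets $Z_\pm = \{z\in R : \tilde r^\pm(z)=0\}$ are discrete. Pick any point $z_* \in R \setminus (Z_+ \cup Z_-)$. Then $\tilde r^+(z_*)$ and $\tilde r^-(z_*)$ are both nonzero eigenvectors of $M(z_*)$ corresponding to the distinct eigenvalues $e^{\pm ip(z_*)}$, so they are linearly independent. Hence $\det(\tilde r^+(z_*)\ \tilde r^-(z_*)) \ne 0$, and therefore the determinant, and with it $\tilde l^\pm \tilde r^\pm$, is a nontrivial analytic function on $R$.

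The only real obstacle is the nontriviality step, and it dissolves once one uses regularity of $R$ (to ensure distinct eigenvalues, hence linear independence of the nonvanishing eigenvectors) together with the discreteness of the zero sets of the nontrivial analytic vectors $\tilde r^\pm$. All the algebraic content is already contained in \Cref{le:det}, so no new calculation is required.
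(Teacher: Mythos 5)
Your proof is correct, but it is organized differently from the paper's. The paper derives the corollary directly from the relation $\tilde r^\pm(z) = c^\pm(z)\,r^\pm(z)$ established in the proof of \Cref{th:V}, where $r^\pm$ is the explicit choice~\eqref{d:r-pm} and $c^\pm$ are nontrivial meromorphic scalars: then $\tilde l^\pm \tilde r^\pm = c^+ c^-\, l^\pm r^\pm$, and the right-hand side is nontrivial because $l^\pm r^\pm = \mp 2i M_{12}\sin p$ (formula~\eqref{f:denominator}) is nontrivial analytic and the $c^\pm$ are nontrivial meromorphic. Your route bypasses the auxiliary vectors $r^\pm$ and the scalars $c^\pm$ entirely: you apply the algebraic identity of \Cref{le:det} directly to $\tilde r^\pm$, reducing nontriviality to the nonvanishing of $\det(\tilde r^+\ \tilde r^-)$ at a single point, which you get by picking a $z_*$ outside the discrete zero sets of the nontrivial analytic vectors $\tilde r^\pm$ and invoking linear independence of eigenvectors for the distinct eigenvalues $e^{\pm ip(z_*)}$. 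This is more self-contained (it does not depend on what was proved inside \Cref{th:V}), at the cost of not matching the paper's framing of the statement as a corollary ``from the proof of \Cref{th:V}.'' One small imprecision: you assert the hypotheses of \Cref{le:det} ``hold pointwise for each $z\in R$,'' but at a zero of $\tilde r^+$ or $\tilde r^-$ the vector in question is not an eigenvector. This is harmless — the identity $\tilde l^\pm \tilde r^\pm = \pm\det(\tilde r^+\ \tilde r^-)$ is purely algebraic (it holds for any choice of column vectors, as the proof of \Cref{le:det} shows), and in any case you only need it at the chosen regular point $z_*$ — but it would be cleaner to say so.
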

\begin{proof} The statement follows from~\eqref{le:det} as $\tilde r^{\pm}(z)=
c^{\pm} r^{\pm}$, where $c^\pm$ are nontrivial meromorphic  functions. 
\end{proof} 
\subsection{$\Omega_\pm$ near turning points}
\subsubsection{The complex momentum near a turning point}
Let $z_0\in D$ be a turning point for \cref{main}. 
We call it {\it simple} if $(\text{Tr M})'(z_0)\ne 0$. 
\\
One can easily see that, near a simple turning point $z_0$, the complex momentum 
is an analytic function of  $\tau=\sqrt{z-z_0}$, and one has
\begin{equation}
  \label{eq:p-tau}
  p(z)=p(z_0)+p_1\tau+o(\tau), \quad \tau\to 0,
\end{equation}
where $p_1$ is a non-zero constant.
Below, near a simple branch point $z_0$, we choose $\tau=\sqrt{z-z_0}$ as 
the local coordinate.
\subsubsection{$\Omega_\pm$ near a turning point} 
Let $z_0$ be a simple turning point. One has
\begin{lemma}\label{le:Omega:bp:1}
The point $z_0$ is a simple pole of $\Omega_\pm$.
If $M_{12}(z_0)\ne 0$, then   $\res_{z_0}\Omega_\pm =-\frac12$.
Otherwise, $\res_{z_0}\Omega_\pm =-\frac32$.
\end{lemma}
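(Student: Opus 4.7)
The plan is to read the residues off the explicit formula \eqref{le:f:Omega} after passing to the local uniformizing coordinate $\tau=\sqrt{z-z_0}$ on the Riemann surface of $p$ near the branch point. By \eqref{eq:p-tau} the first term $\mp\tfrac{i}{2}\,dp$ in \eqref{le:f:Omega} is holomorphic at $\tau=0$, so the entire singular contribution comes from the rational term.

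Before computing, I would record the local data. Since $z_0$ is a turning point, $\tr M(z_0)=\pm 2$, so $p(z_0)\in \pi\Z$; setting $\epsilon:=\cos p(z_0)\in\{+1,-1\}$ one has $e^{\pm ip(z_0)}=\epsilon$, and \eqref{eq:p-tau} yields
\begin{equation*}
  \sin p(z)=\epsilon p_1\tau+O(\tau^2),\qquad e^{\pm ip(z)}=\epsilon\pm i\epsilon p_1\tau+O(\tau^2),
\end{equation*}
while every function analytic in $z$ (notably $M_{11}$ and $M_{12}$) is a power series in $\tau^2$. In particular, the denominator $2i\sin p=2i\epsilon p_1\tau+O(\tau^2)$ of the rational term has a simple zero at $\tau=0$, so once its numerator is shown to be $O(1)\,d\tau$ the pole will automatically be simple.

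In the case $M_{12}(z_0)\ne 0$, the form $d\ln M_{12}$ is holomorphic, so $(e^{\mp ip}-M_{11})\,d\ln M_{12}=O(\tau)\,d\tau$ contributes nothing to the residue, and the decisive piece is $-d(e^{\pm ip}-M_{11})=(\mp i\epsilon p_1+O(\tau))\,d\tau$. Dividing by $2i\sin p$ and multiplying by the outer sign $\pm$ gives $\Omega_\pm=-\tfrac12\,d\tau/\tau+\text{(holomorphic)}$, hence the residue is $-\tfrac12$.

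The case $M_{12}(z_0)=0$ is the main obstacle: the leading term of $e^{\mp ip}-M_{11}$ vanishes, and a priori the answer could depend on the order of the zero of $M_{12}$. To control the latter I would use the $SL(2,\C)$ structure: from $\det M\equiv 1$ and $\tr M(z_0)=2\epsilon$ one gets $M_{11}(z_0)=M_{22}(z_0)=\epsilon$; differentiating $\det M\equiv 1$ at $z_0$ and using $(\tr M)'(z_0)\ne 0$ (the turning point is simple) yields $M_{12}'(z_0)M_{21}(z_0)=\epsilon(\tr M)'(z_0)\ne 0$, so $M_{12}$ has a simple zero at $z_0$ and $d\ln M_{12}=(2/\tau+O(\tau))\,d\tau$. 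Because $M_{11}(z_0)=\epsilon$ cancels the leading term of $e^{\mp ip(z)}$, we now have $e^{\mp ip}-M_{11}=\mp i\epsilon p_1\tau+O(\tau^2)$, and $(e^{\mp ip}-M_{11})\,d\ln M_{12}$ contributes a finite $\mp 2i\epsilon p_1\,d\tau$ that \emph{adds} to the $\mp i\epsilon p_1\,d\tau$ from $-d(e^{\pm ip}-M_{11})$, giving the numerator $\mp 3i\epsilon p_1\,d\tau+O(\tau)\,d\tau$. Dividing by $2i\sin p$ and absorbing the outer sign yields $\Omega_\pm=-\tfrac32\,d\tau/\tau+\text{(holomorphic)}$, so the residue equals $-\tfrac32$. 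As a sanity check, \Cref{le:sum_of_Omegas} together with \eqref{f:det} gives $\Omega_++\Omega_-=-d\ln\sin p-d\ln M_{12}+\text{(holomorphic)}$, whose residue at $\tau=0$ is $-1$ in Case 1 and $-3$ in Case 2, matching $2\,\res_{z_0}\Omega_\pm$ in both cases.
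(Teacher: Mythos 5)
Your proof is correct and follows essentially the same route as the paper's: both pass to the uniformizing coordinate $\tau=\sqrt{z-z_0}$, observe that the first term $\mp\tfrac{i}{2}dp$ is regular, expand $\sin p$, $e^{\pm ip}$, $M_{11}$, $M_{12}$ to leading order, and use $\det M\equiv 1$ together with $(\tr M)'(z_0)\ne 0$ to show that $M_{12}$ has a simple zero in $z$ when $M_{12}(z_0)=0$. The only cosmetic difference is that the paper packages the $M_{12}(z_0)\ne 0$ singularity as $-\tfrac12 d\ln(e^{2ip}-1)$ and, in the other case, uses the identity $e^{-ip}-M_{11}=-i\sin p+\tfrac{M_{22}-M_{11}}{2}$, whereas you do a direct Laurent expansion of the numerator; both yield the same residues.
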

\begin{proof}
For the sake of definiteness, we prove this lemma only for $\Omega_+$.
\\
First, we consider the case where $M_{12}(z_0)\ne 0$.
Let us consider the terms in the right hand side of~(\ref{le:f:Omega}). In a neighborhood
of $\tau=\sqrt{z-z_0}=0$, one has :
\begin{itemize}
  \item as $p$ is analytic in $\tau$, and thus, $d p$ is  a holomorphic differential;
  \item as $z=z_0+\tau^2$, \ $e^{-ip}-M_{11}$ is analytic in $\tau$; 
  \item as $d z=2\tau\,d\tau$, one has $d M_{11}=\tau g(\tau)\,d\tau$,
 where $g$ is analytic;
 \item as $M_{12}(z_0)\ne 0$,   one has
  $d\ln M_{12}=\tau f(\tau)\,d\tau$, where $f$ is analytic;
  \item as $p(z_0)\in \pi \Z$, and in view of~\eqref{eq:p-tau}, 
$\sin p$ is analytic in $\tau$ and has a simple zero at $\tau=0$. 
\end{itemize}
These observations and formula~\eqref{le:f:Omega} imply that, 
in a neighborhood of $\tau=0$, 
\begin{equation*}
  \Omega_+=-\frac{de^{ip}}{2i\sin p}+\text{a holomorphic differential}.
\end{equation*}
As $\frac{de^{ip}}{2i\sin p}=\frac12\,d\ln(e^{2ip}-1)$, and as $e^{2ip}-1$ has a simple 
zero at $z_0$, see formula~\eqref{eq:p-tau}, this implies that $z_0$ is a simple pole of $\Omega_+$,
and that $\res_{z_0}\Omega_+=-1/2$.
\\
Now, we assume that $M_{12}(z_0)=0$. 
Then, in a neighborhood of $\tau=0$, the differential
\begin{equation}\label{eq:Omega:bp:1:1}
  \Omega_+-\left(\frac{\left(e^{-ip}-M_{11}\right)\, d\,\ln M_{12}}{2i\sin p}-
\frac{de^{ip}}{2i\sin p}\right)
\end{equation}
is holomorphic. Let us consider the first term in the brackets.
\\
We have $e^{ip(z_0)}=e^{-ip(z_0)}$. On the other hand, as $M_{12}(z_0)=0$, 
and as $\det M\equiv 1$, \ $M_{11}(z_0)$ and $M_{22}(z_0)$ are eigenvalues of $M(z)$. 
So, we have $ M_{11}(z_0)=M_{22}(z_0)$.
\\
Using the definition of the complex momentum,  we get
$$e^{-ip}-M_{11}=\frac{e^{-ip}-M_{11}}2+\frac{M_{22}-e^{ip}}2=
-i\sin p+\frac{M_{22}-M_{11}}2.$$
Therefore, near $\tau=0$, one has
\begin{equation}\label{eq:Omega:bp:1:2}
\frac{e^{-ip}-M_{11}}{2i\sin p}=-\frac12+O(\tau).
\end{equation}
Now, to complete the proof, it suffices to check that near $\tau=0$
\begin{equation}\label{eq:Omega:bp:1:3}
d\ln M_{12}=\frac{2\,d\tau}{\tau}+\text{a holomorphic differential}.
\end{equation}
Indeed, this and \eqref{eq:Omega:bp:1:2} imply that
the first term in the brackets in \eqref{eq:Omega:bp:1:1}
has a simple pole with the residue  equal to $-1$. On the other hand,
we have already seen that at $\tau=0$ the second term in the brackets
has a simple pole with the residue equal to $-\frac12$. These 
observations lead to the second statement of the lemma.
\\
As $\frac{2d\,\tau}{\tau}=\frac{dz}{z}$, to prove 
representation~\eqref{eq:Omega:bp:1:3}, we need only to check  
that the zero of $M_{12}$ at $z_0$ is simple. As $\det M\equiv 1$, 
and as $M_{11}(z_0)=M_{22}(z_0)\ne 0$, we have
\begin{equation}
M_{12}'M_{21}|_{z=z_0}=M_{11}'M_{22}+M_{22}' M_{11}-M_{12}M_{21}'|_{z=z_0}=
M_{11}(z_0) (\text{Tr}\, M)'(z_0).  
\end{equation}
So, as $z_0$ is a simple turning point, one has $M_{12}'(z_0)\ne 0$.
The proof  is completed. 
\end{proof}
\subsection{The behavior of $p$ and $\Omega_\pm$ as $|\im z|\to\infty$} 
Below, we assume that $M$ is a trigonometric polynomial satisfying
the assumptions formulated in \cref{ss:assumptions}.
\\
We assume that $Y>0$ is so large that 
the half-planes $\C_u(Y)=\{\im z\ge Y\}$ and $\C_d(Y)=\{\im z\le -Y\}$ are regular,
and $M_{12}$ does not vanish in them.
\\
Here, we study the complex momentum and  $\Omega_\pm$ 
in $\C_u(Y)$ and $\C_d(Y)$. In particular, we get their asymptotic representations
as $|y|\to\infty$, \ $y=\im z$.
\\ 
Below $C$ denotes different positive constants, and $O(f(z))$ denotes 
an expression bounded by $C|f(z)|$ in the domain we consider.
\\
For a trigonometric polynomial $P$,  \ $P(z)=\sum_{j=-l}^k P_j e^{2\pi i j z}$, where
$P_j$ are  Fourier coefficients, and  $P_{-l} P_k\ne 0$,  we let $P_u=P_{-l}$, 
$P_d=P_{k}$,  $n_u(P)=l$ and $n_d(P)=k$.
\\
Let $t=\tr M$. In view our assumptions made in \cref{ss:assumptions}, one has
\begin{equation}\label{nUnA}
n_s (M_{12} ), \ n_s (M_{21}), \ n_s (M_{22} ) \le n_s (M_{11} ) = n_s(t)>0,\qquad s\in\{u,d\}.
\end{equation}
We also note that this and the equality $\det M\equiv 1$ imply that
\begin{equation} \label{nUnA:1}
n_u(M_{22})\le n_u(M_{12}).
\end{equation}
\subsubsection{The behavior of the complex momentum}
Let us fix in $\C_u(Y)$ an analytic branch $p$ of the complex momentum.
In view of~(\ref{eq:p}),  one has
\begin{equation}\label{as:p:up}
  p(z)=s_u\left(2\pi n_u(t) z +i\ln t_u +O(e^{-2\pi y})\right),\quad y=\im z,\quad  z\in \C_u(Y),
\end{equation}
where  $s_u\in\{\pm 1\}$ and the branch of $\ln$ are determined by 
the choice of the branch $p$. We note that by our assumptions $n_u(t)>0$, 
see \cref{ss:assumptions}.
\\
By means of  the Cauchy estimates for the derivatives of analytic functions,
we deduce from~\eqref{as:p:up} the estimates : 
\begin{equation}\label{eq:p-prime}
  p'(z)=2s_u\pi n_u(t)  +O(e^{-2\pi y}), \ 
 p''(z)=O(e^{-2\pi y}),\quad y=\im z,\  z\in \C_u(Y).
\end{equation}
We also note that 
\begin{equation}
  \label{eq:p:periodicity}
  p(z+1)=p(z)+2 s_u \pi n_u(t) ,\quad  z\in\C_u(Y). 
\end{equation}
Indeed, it follows from~\eqref{eq:p}, that $p(\cdot +1)$ is a branch of 
the complex momentum analytic in $\C_u(Y)$. This and \eqref{eq:p} imply 
that $p(\cdot+1)=s\, p(\cdot) \ \mod 2\pi$, \ $s\in\{\pm 1\}$. This and~\eqref{as:p:up} 
imply~\eqref{eq:p:periodicity}.
 
Let us fix in $\C_d(Y)$ an analytic branch $p$ of the complex momentum.  
Reasoning as for $\im z \ge Y$, we now prove that
\begin{equation}\label{as:p:down}
  p(z)=s_d\left(-2\pi n_d(t) z +i\ln t_d +O(e^{-2\pi |y|})\right),\quad  z\in \C_d(Y),
\end{equation}
where $s_d\in\{\pm 1\}$. Note that $n_d(t)>0$. Furthermore, we have
\begin{equation}\label{eq:p-prime:d}
  p'(z)=-2s_dn_d(t)\,\pi   +O(e^{-2\pi y}),
\quad p''(z)=O(e^{-2\pi y}),\quad   z\in \C_d(Y),
\end{equation}
and
\begin{equation}
  \label{eq:p:periodicity:down}
  p(z+1)=p(z)-2 s_d \pi n_d(t),\quad z\in\C_d(Y). 
\end{equation}
\subsubsection{The behavior of $\Omega_\pm$}\label{ss:Omega:infty}
Let $p$ be a branch of the complex momentum analytic  in $\C_u(Y)$  
satisfying~\eqref{as:p:up} with $s_u=1$. Here we study 
in $\C_u(Y)$  the differentials $\Omega_\pm $ defined in terms of this branch $p$ 
by~\eqref{le:f:Omega}.
\\
The half-plane $\C_u(Y)$ being regular, we can represent there $\Omega_\pm$ in the form
$\Omega_\pm(z)=\omega_\pm(z) dz$. For our choice of $Y$, the functions $\omega_\pm$ 
are analytic in $\C_u(Y)$. 
\\
Thanks to~\eqref{eq:p:periodicity}, one has
\begin{equation}
  \label{eq:omega:periodicity}
  \omega_\pm(z+1)=\omega_\pm(z),\quad z\in \C_u(Y).
\end{equation}
Let us check
\begin{proposition}\label{pro:omegas-at-infty:u} For sufficiently large  $Y$
and   $z\in \C_u(Y)$, one has
\begin{equation}\label{as:omegas}
\begin{split}
\omega_+(z)&=\pi i n_u(t)+O(e^{-2\pi y}),\\ 
\omega_-(z)&=\pi i n_u(t)+2\pi i n_u(M_{12})+O(e^{-2\pi y}).
\end{split}
\end{equation}
\end{proposition}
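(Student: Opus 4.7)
The plan is to compute the leading constants of $\omega_\pm(z)$ as $y = \im z \to \infty$ by substituting the asymptotic expansions of $p,\,\sin p,\,e^{\pm ip}$ and of $M_{ij}$ with their derivatives (the latter read off from the leading Fourier modes) into the explicit formula~\eqref{le:f:Omega}; the $O(e^{-2\pi y})$ error will then come automatically from the next Fourier harmonic in $e^{2\pi i z}$. The key observation driving the computation is that $2i\sin p = e^{ip} - e^{-ip}$ is dominated by $-e^{-ip}$ (since $\im p \to +\infty$ by~\eqref{as:p:up}), so $1/(2i\sin p)$ acquires an exponentially small factor of $-e^{ip}$; this forces each quotient $M_{ij}/(2i\sin p)$ or $M_{ij}'/(2i\sin p)$ to either converge to an explicit constant or decay like $e^{-2\pi y}$.

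First I simplify the numerators in~\eqref{le:f:Omega} via the trace identity $e^{\mp ip} - M_{11} = M_{22} - e^{\pm ip}$ (from $2\cos p = M_{11} + M_{22}$) and evaluate each resulting piece in the limit. A direct calculation yields
\begin{equation*}
  \omega_+(z) = -i\pi n_u(t) + \frac{2\pi i\,n_u(t)\,M_{11,u}}{t_u} + C + O(e^{-2\pi y}),
\end{equation*}
where $C = 2\pi i\,n_u(M_{12})\,M_{22,u}/t_u$ if $n_u(M_{22}) = n_u(t)$ and $C = 0$ otherwise (because $e^{ip}M_{22}$ has limit $M_{22,u}/t_u$ in the former case and is $O(e^{-2\pi y})$ in the latter). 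The parallel computation for $\omega_-$ gives the constant $i\pi n_u(t) - 2\pi i\,n_u(t)\,M_{11,u}/t_u + 2\pi i n_u(M_{12}) + 2\pi i n_u(t) - C$, where the two linear contributions $+2\pi i n_u(M_{12})$ and $+2\pi i n_u(t)$ come from $e^{-ip}/(2i\sin p) \to -1$ and $-e^{-ip}p'/(2\sin p) \to ip'$.

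Second, I check case-by-case that these constants reduce to~\eqref{as:omegas}. If $n_u(M_{22}) < n_u(M_{11}) = n_u(t)$, then $t_u = M_{11,u}$ and $C = 0$, immediately yielding $\omega_+ = i\pi n_u(t) + O(e^{-2\pi y})$ and $\omega_- = i\pi n_u(t) + 2\pi i n_u(M_{12}) + O(e^{-2\pi y})$. If $n_u(M_{22}) = n_u(M_{11})$, so $M_{11,u}M_{22,u}\ne 0$, then $\det M\equiv 1$ forces the coefficient of $e^{-4\pi i n_u(t) z}$ in $M_{11}M_{22} - M_{12}M_{21}$ to vanish; combined with the a priori bounds $n_u(M_{12}),\,n_u(M_{21}) \le n_u(t)$, this yields $n_u(M_{12}) = n_u(M_{21}) = n_u(t)$. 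Substituting $n_u(M_{12}) = n_u(t)$, the $M_{11,u}$ and $M_{22,u}$ contributions combine as $\pm 2\pi i n_u(t)(M_{11,u}+M_{22,u})/t_u = \pm 2\pi i n_u(t)$, and the total constants reduce to $i\pi n_u(t)$ for $\omega_+$ and to $i\pi n_u(t) + 2\pi i n_u(t) = i\pi n_u(t) + 2\pi i n_u(M_{12})$ for $\omega_-$, matching~\eqref{as:omegas}.

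The main obstacle is the second case: without invoking $\det M \equiv 1$, the naively computed constants mix $M_{11,u}/t_u$ and $M_{22,u}/t_u$ in a way that does not visibly match~\eqref{as:omegas}, and only the algebraic consequence $n_u(M_{12}) = n_u(t)$ makes the cancellation work. Everything else—extracting leading Fourier modes, controlling the $O(e^{-2\pi y})$ remainder via the next harmonic in $e^{2\pi i z}$, and using the already-established periodicity—is routine.
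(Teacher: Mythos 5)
Your proposal is correct, and the underlying idea—substitute the leading Fourier asymptotics of $p,\,p',\,e^{\pm ip},\,1/(2i\sin p)$ and of the $M_{ij}$ into the explicit formula~\eqref{le:f:Omega} and identify the constant part—is the same as in the paper. The difference is in the bookkeeping. The paper first rewrites $\omega_+$ as $-\pi i n_u(t)-\frac{M_{22}\,M_{12}'/M_{12}+M_{11}'}{M_{11}+M_{22}}+O(e^{-2\pi y})$ (and the analogous expression for $\omega_-$), which keeps the computation symmetric without introducing the ratios $M_{11,u}/t_u,\,M_{22,u}/t_u$ explicitly; it then splits into the cases $n_u(M_{12})=n_u(M_{11})$ versus $n_u(M_{12})<n_u(M_{11})$ (for $\omega_+$) and $n_u(M_{12})=n_u(M_{22})$ versus $n_u(M_{12})>n_u(M_{22})$ (for $\omega_-$), using $\det M\equiv1$ in the form of the inequality $n_u(M_{22})\le n_u(M_{12})$ of~\eqref{nUnA:1}. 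You instead track the leading coefficients $M_{11,u}/t_u$, $M_{22,u}/t_u$ through the computation and split on whether $n_u(M_{22})<n_u(t)$ or $n_u(M_{22})=n_u(t)$, invoking $\det M\equiv1$ in the latter case to deduce $n_u(M_{12})=n_u(M_{21})=n_u(t)$ and let the coefficients cancel as $(M_{11,u}+M_{22,u})/t_u=1$. Both case splits cover all possibilities and both rely on the same algebraic consequence of $\det M\equiv1$; the paper's rewriting is a bit slicker since the ratio $M_{22}/(M_{11}+M_{22})$ absorbs the coefficient bookkeeping automatically, while your version makes the cancellation explicit. Your treatment of the $O(e^{-2\pi y})$ remainder (``from the next Fourier harmonic'') is sound, matching the paper's use of the Cauchy estimates~\eqref{eq:p-prime}.
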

\begin{proof} Let us begin with $\omega_+$.
Using~\eqref{le:f:Omega} and (\ref{eq:p}), we get
\begin{equation*}
\omega_+(z)=- \frac{ip'(z)}2 -\frac{M_{22}(z)\frac {M_{12}'(z)}{M_{12}(z)}+M_{11}'(z)-
e^{ip(z)}\left(\frac {M_{12}'(z)}{M_{12}(z)}+ip'(z) \right)}
{M_{11}(z)+M_{22}(z)-2e^{ip(z)}}.
\end{equation*}
In view of~\eqref{as:p:up} and as $s_u=1$,  one has
\begin{equation}\label{est:exp-and-tr}
  e^{ip(z)}=O(e^{-2\pi y}),\quad  e^{-ip(z)}/\tr M(z)\to 1,\quad y\to \infty.
\end{equation}
This and~\eqref{eq:p-prime} lead to the formula  
\begin{equation}\label{omega:calc:1}
\omega_+(z)=-\pi i n_u(t)-\frac{M_{22}(z)\frac {M_{12}'(z)}{M_{12}(z)}+M_{11}'(z)}
{M_{11}(z)+M_{22}(z)} +O(e^{-2\pi y}).
\end{equation}
Now, we consider the case where $n_u(M_{12})=n_u(M_{11})$.
Then as $z\to \infty$ one has 
$$\frac {M_{12}'(z)}{M_{12}(z)}=-2\pi i n_u(M_{11})+O(e^{-2\pi y}),\quad
\frac {M_{11}'(z)}{M_{11}(z)}=-2\pi i n_u(M_{11})+O(e^{-2\pi y}),$$
and, in view of~\eqref{nUnA}, we get
\begin{equation}\label{omega:calc:2}
  \frac{M_{22}(z)\frac {M_{12}'(z)}{M_{12}(z)}+M_{11}'(z)}
{M_{11}(z)+M_{22}(z)}=-2\pi i n_u(t) +O(e^{-2\pi y}).
\end{equation}
This and~\eqref{omega:calc:1} leads to the first formula in~\eqref{as:omegas}.
\\
To complete the proof, we have to analyze the case where $n_u(M_{12})<n_u(M_{11})$.
Then, in view of~\eqref{nUnA:1}, one has 
$M_{22}(z)/M_{11}(z)=O(e^{-2\pi y})$, and we again come to~\eqref{omega:calc:2},
and thus to  the first formula in~\eqref{as:omegas}. This complete its proof.

Now, let us turn to $\omega_-$. Instead of~\eqref{omega:calc:1}, we now get
\begin{equation}\label{omega:calc:3}
\omega_-(z)=\pi i n_u(t)-\frac{M_{11}(z)\frac {M_{12}'(z)}{M_{12}(z)}+M_{22}'(z)}
{M_{11}(z)+M_{22}(z)} +O(e^{-2\pi y}),
\end{equation}
and considering consequently the case where $n_u(M_{12})=n_u(M_{22})$ and then
the case where $n_u(M_{12})>n_u(M_{22})$ (and, therefore, $n_u(M_{11})>n_u(M_{22})$) 
we prove that   
\begin{equation*}
\frac{M_{11}(z)\frac {M_{12}'(z)}{M_{12}(z)}+M_{22}'(z)}
{M_{11}(z)+M_{22}(z)}=-2\pi i n_u(M_{12}) +O(e^{-2\pi y}).
\end{equation*}
This leads to the second formula in~\eqref{as:omegas}. 
The proof  is complete.
\end{proof}
\noindent
Let $p$ be a branch of the complex momentum analytic  in $\C_d(Y)$  
and satisfying~\eqref{as:p:down} with $s_d=1$. Now, we study 
in $\C_d(Y)$  the  $\Omega_\pm $ defined in terms of this $p$ 
by~\eqref{le:f:Omega}.
\\
One has $\Omega_\pm(z)=\omega_\pm(z) dz$, where $\omega_\pm$ are analytic in $\C_d(Y)$ 
functions. 
\\
We get the formula
\begin{equation}
  \label{eq:omega:periodicity:d}
  \omega_\pm(z+1)=\omega_\pm(z),\quad z\in \C_d(Y),
\end{equation}
and
\begin{proposition}\label{pro:omegas-at-infty:d} Let $Y$ 
be sufficiently large. Then in $C_d(Y)$ 
\begin{equation}\label{as:omegas:d}
\begin{split}
\omega_+(z)&=-\pi i n_d(M_{11})+O(e^{-2\pi |y|}),\\ 
\omega_-(z)&=-\pi i n_d(M_{11})-2\pi i n_d(M_{12})+O(e^{-2\pi |y|}).
\end{split}
\end{equation}
\end{proposition}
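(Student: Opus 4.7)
The plan is to mirror verbatim the proof of~\Cref{pro:omegas-at-infty:u}, with two sign flips accounting for the lower half-plane: (a) $p'(z)=-2\pi n_d(t)+O(e^{-2\pi|y|})$ by~\eqref{eq:p-prime:d} carries a minus sign (since $s_d=1$), and (b) the dominant Fourier exponential at $y\to-\infty$ is $e^{+2\pi i\,n_d(M_{jk})z}$ rather than $e^{-2\pi i\,n_u(M_{jk})z}$, so that $M_{jk}'(z)/M_{jk}(z)\to+2\pi i\,n_d(M_{jk})$ with the opposite sign.

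I begin by verifying that the two estimates $e^{ip(z)}=O(e^{-2\pi|y|})$ and $e^{-ip(z)}/\tr M(z)\to 1$ continue to hold in $\C_d(Y)$: both follow from~\eqref{as:p:down}, since $\mathrm{Re}(ip(z))=2\pi n_d(t)\,y+O(1)\to-\infty$ as $y\to-\infty$. Plugging these and the asymptotics of $p'$ into~\eqref{le:f:Omega}, as in the derivation of~\eqref{omega:calc:1}, gives
\[
\omega_+(z)=\pi i\,n_d(t)-\frac{M_{22}(z)\,\frac{M_{12}'(z)}{M_{12}(z)}+M_{11}'(z)}{M_{11}(z)+M_{22}(z)}+O(e^{-2\pi|y|}),
\]
with leading constant $+\pi i\,n_d(t)$ (rather than $-\pi i\,n_u(t)$) because of sign flip~(a). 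The analogous manipulation for $\Omega_-$ yields
\[
\omega_-(z)=-\pi i\,n_d(t)-\frac{M_{11}(z)\,\frac{M_{12}'(z)}{M_{12}(z)}+M_{22}'(z)}{M_{11}(z)+M_{22}(z)}+O(e^{-2\pi|y|}).
\]

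It then remains to evaluate the two fractions as $y\to-\infty$. I would repeat the two-case analysis used in~\Cref{pro:omegas-at-infty:u}. For the $\omega_+$-fraction, case (i) $n_d(M_{12})=n_d(M_{11})$ is handled by direct substitution of the leading Fourier asymptotics and yields limit $+2\pi i\,n_d(t)$; case (ii) $n_d(M_{12})<n_d(M_{11})$ requires $M_{22}(z)/M_{11}(z)=O(e^{-2\pi|y|})$, which is the downward analog of~\eqref{nUnA:1}, namely $n_d(M_{22})\le n_d(M_{12})$. This inequality is proved by the same $\det M\equiv 1$ argument: were $n_d(M_{22})>n_d(M_{12})$, the top Fourier coefficient of $M_{11}M_{22}-M_{12}M_{21}$ would equal $(M_{11})_d(M_{22})_d\ne 0$, contradicting $\det M\equiv 1$. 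Subtracting then yields $\omega_+(z)=-\pi i\,n_d(t)+O(e^{-2\pi|y|})=-\pi i\,n_d(M_{11})+O(e^{-2\pi|y|})$ in view of~\eqref{nUnA}. The treatment of $\omega_-$ is parallel, with the corresponding fraction tending to $+2\pi i\,n_d(M_{12})$ after splitting according to whether $n_d(M_{12})$ equals or exceeds $n_d(M_{22})$. The main obstacle — really the only delicate point — is careful bookkeeping of the two sign flips noted at the outset; once those are tracked, every other step is verbatim identical to the proof of~\Cref{pro:omegas-at-infty:u}.
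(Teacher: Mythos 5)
Your proof is correct and matches what the paper (which omits the argument, declaring it ``similar'' to that of \Cref{pro:omegas-at-infty:u}) plainly intends: one repeats the upper-half-plane computation verbatim, tracking the two sign changes you identify ($p'\to-2\pi n_d(t)$ from~\eqref{eq:p-prime:d}, and $M_{jk}'/M_{jk}\to+2\pi i\,n_d(M_{jk})$ from the dominant Fourier mode $e^{+2\pi i n_d(M_{jk})z}$ as $y\to-\infty$), and invokes the $d$-analogue of~\eqref{nUnA:1} obtained by the same $\det M\equiv 1$ degree argument. The bookkeeping in the two analogues of~\eqref{omega:calc:1} and~\eqref{omega:calc:3}, and the case splits for the two fractions, are all carried out correctly, and using $n_d(t)=n_d(M_{11})$ from~\eqref{nUnA} gives precisely~\eqref{as:omegas:d}.
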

\noindent
The proof of this proposition being similar to one of \Cref{pro:omegas-at-infty:u},
we omit it.
\subsection{Remarks on the Riemann surface of $\Omega_\pm$}
The differentials $\Omega_\pm$  are two branches of a meromorphic 
differential $\Omega$ defined on the Riemann surface of the analytic 
function $w:z\mapsto e^{ip(z)}$ (this Riemann surface has two sheets).

As $\tr M$ is a trigonometric polynomial,  it is natural to consider $w$ as a function of 
the variable $u=e^{2\pi i z}$. Then the Riemann surface $\Gamma$ of $w$ appears 
to be a hyperelliptic curve. In particular, in the case where $\tr M$ is a first order 
trigonometric polynomial, relation  \eqref{eq:p} implies that 
\begin{equation}
  w+1/w=t_{1}u +t_0+t_{-1}/u, \quad u\in \C,
\end{equation}
where $t_1$, $t_0$ and $t_{-1}$ are constants, and 
$|t_{1}|^2+|t_{-1}|^2\ne0$.  Therefore $w$ is single-valued  on the Riemann
surface of the function $u\mapsto \sqrt{(t_{1}u^2 +t_0u+t_{-1})^2-4 u^2}$, which 
is a hyperelliptic curve of genus one, see~\cite{springer}. 

The analysis done in the previous sections shows that on $\Gamma$ the differential 
$\Omega$ has simple poles at zeros of $M_{12}$ (on the sheets where $w(z)-M_{11}(z)$ 
vanishes), at all the branch points of $p$,   at zero and at infinity.

The fact that $\Omega$ is meromorphic on a hyperelliptic curve $\Gamma$ is
important for applications of the complex  WKB method. In particular, it implies
that  the differential $\Omega$ can be expressed in terms of standard abelien 
differentials defined on $\Gamma$, and that the integrals of $\Omega$ along  
closed  curves on $\Gamma$ can be expressed in terms of  integrals  
along a finite number of cycles (closed curves) of a canonical basis of the first 
homology group of $\Gamma$. We omit further details and note only that the 
reader can find examples of using the theory of hyperelliptic curves in the  
WKB analysis in \cite{F-K:05a}.  
\section{The proof of \Cref{th:main:local} for bounded 
canonical domains}\label{s:proof1}
We prove \Cref{th:main:local} by reducing the analysis of~\cref{main}
to analyzing  a finite difference equation of precisely the same form 
as the one studied  in \cite{F-Shch:17}. 
\\
Below $R$ is a regular horizontally connected domain, and $p$ is a branch of the
complex momentum analytic in $R$. We always assume that $z,z+h\in R$. 
\\
Also, for a matrix-function $A$, \ $A^{-1}(z)$ is the matrix inverse to $A(z)$.
\\
In this section all the estimates and asymptotics  are locally uniform in $z$. 
\\
We pick $z_0\in R$ so that $\det (r^+(z_0)\;r^-(z_0))\ne 0$, and define in terms 
of $\Omega_\pm$ and $r^\pm$ the analytic eigenvectors $V^\pm$ of $M$ 
normalized at $z_0$. 
\subsection{Asymptotic transformation
of the matrix in~\eqref{main}}  
Let us note that the leading terms in~\eqref{as:main}, i.e., the vectors
\begin{equation}
  \label{eq:Psi0}
  \Psi_0^\pm(z)= e^{\pm\frac{i}{h}\int_{z_0}^zp(z)\,dz}V^\pm(z),
\end{equation}
are eigenvectors of $M(z)$, corresponding to its eigenvalues $e^{\pm ip(z)}$.
In view of~(\ref{eq:det}),
\begin{equation}\label{det:Psi0}
\det(\Psi_0^+(z)\; \Psi_0^-(z)) =\det(r^+(z_0)\;r^-(z_0)).
\end{equation}
We define the matrix 
$\Psi_0(z)=(\Psi_0^+(z)\,\Psi_0^-(z))$ and represent a vector solution $\Psi$ to $\cref{main}$
in the form $\Psi(z)=\Psi_0(z)X(z)$.  Then $X$ satisfies the equation
\begin{equation}\label{eq:Phi}
  X(z+h)=T(z)X(z)
\end{equation}
with
\begin{equation}
\label{eq:TM}
T(z)=\Psi_0^{-1}(z+h)M(z) \Psi_0(z).
\end{equation}
We prove
\begin{proposition}\label{pro:diag} As $h\to 0$, one has
\begin{equation}\label{T:as}
  T(z)=I+
  \begin{pmatrix}
    O(h^2) & O(h)e^{-\frac{2i\theta(z)}h}\\
    O(h)\,e^{\frac{2i\theta(z)}h} & O(h^2)
  \end{pmatrix},
\quad 
\theta(z)=\int_{z_0}^zp(z)\,dz.
\end{equation}
\end{proposition}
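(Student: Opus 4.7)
The plan is to unfold $T(z)$ via the factorization $\Psi_0(z) = V(z)E(z)$, where $V(z) = (V^+(z)\;V^-(z))$ and $E(z) = \diag(e^{i\theta(z)/h}, e^{-i\theta(z)/h})$. Since $V^\pm$ are eigenvectors of $M(z)$ for the eigenvalues $e^{\pm i p(z)}$, we have $M(z)V(z) = V(z)\Lambda(z)$ with $\Lambda(z) = \diag(e^{ip(z)}, e^{-ip(z)})$; as $\Lambda$ and $E$ are diagonal they commute, so \eqref{eq:TM} becomes
\begin{equation*}
T(z) = E^{-1}(z+h)\,B(z)\,\Lambda(z)\,E(z), \qquad B(z) := V^{-1}(z+h)V(z).
\end{equation*}

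I would then Taylor-expand the two pieces separately. For the analytic part, $V(z+h) = V(z) + hV'(z) + O(h^2)$ yields $B(z) = I - hA(z) + O(h^2)$ with $A(z) := V^{-1}(z)V'(z)$. For the exponential part, $\theta(z+h) - \theta(z) = hp(z) + \tfrac{h^2}{2}p'(z) + O(h^3)$, so a direct computation shows that $E^{-1}(z+h)\Lambda(z)E(z)$ is diagonal with $(j,j)$-entry $1 \mp \tfrac{ih}{2}p'(z) + O(h^2)$, while the same conjugation acting on an off-diagonal matrix multiplies the $(1,2)$- and $(2,1)$-entries by bounded analytic factors times $e^{\mp 2i\theta(z)/h}$, respectively.

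The crux is to verify that $A_{11}(z) = -\tfrac{i}{2}p'(z)$ and $A_{22}(z) = \tfrac{i}{2}p'(z)$; this is what makes the diagonal $O(h)$ terms coming from $B$ cancel those coming from the conjugated $\Lambda$, leaving $1 + O(h^2)$. Since $\det V(z) \equiv \det(r^+(z_0)\;r^-(z_0))$ is a non-zero constant by~\eqref{eq:det}, the rows of $V^{-1}(z)$ can be written explicitly in terms of $l^\pm(z)$: using \Cref{le:det} and \Cref{le:sum_of_Omegas}, a short cofactor calculation gives that the first row of $V^{-1}(z)$ equals $\frac{e^{\int_{z_0}^z \Omega_-}}{\det(r^+(z_0)\;r^-(z_0))}\,l^+(z)$ and the second row equals $-\frac{e^{\int_{z_0}^z \Omega_+}}{\det(r^+(z_0)\;r^-(z_0))}\,l^-(z)$. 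Differentiating $V^\pm(z) = e^{\int_{z_0}^{z}\Omega_\pm}\,r^\pm(z)$ and pairing with these rows reduces the diagonal entries of $A$ to $A_{11} = \omega_+ + \frac{l^+(r^+)'}{l^+ r^+}$ and $A_{22} = \omega_- + \frac{l^-(r^-)'}{l^- r^-}$, where $\omega_\pm\,dz = \Omega_\pm$. Inserting the definition~\eqref{def:omega} of $\Omega_\pm$, the logarithmic-derivative terms cancel exactly, leaving $A_{11} = -\tfrac{i}{2}p'$ and $A_{22} = \tfrac{i}{2}p'$, as required.

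The main obstacle is precisely this diagonal cancellation: it is the \emph{raison d'\^etre} of the normalization built into $\Omega_\pm$, and a naive choice $\Psi_0^\pm = e^{\pm i\theta/h}r^\pm$ would only yield $1 + O(h)$ on the diagonal of $T$. Once the cancellation is established, assembling the pieces gives the block decomposition~\eqref{T:as}; the off-diagonal coefficients $O(h)$ come from $B_{12}, B_{21} = O(h)$ (a triviality from the Taylor expansion of $B$), and local uniformity on compacts of $R$ is automatic since $V$, $V^{-1}$ and $p$ are analytic there.
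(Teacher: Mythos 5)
Your proposal is correct and follows essentially the same route as the paper: factor $\Psi_0=VE$, reduce to $T=E^{-1}(z+h)\,W(z)\,\Lambda(z)E(z)$ with $W(z)=V^{-1}(z+h)V(z)$, Taylor-expand $W=I-hV^{-1}V'+O(h^2)$, and use the explicit rows of $V^{-1}$ (which, via \Cref{le:sum_of_Omegas}, are $e^{\int\Omega_\mp}l^\pm/\det(r^+(z_0)\;r^-(z_0))$) to extract the diagonal cancellation $(V^{-1}V')_{11}=-\tfrac{i}{2}p'$, $(V^{-1}V')_{22}=\tfrac{i}{2}p'$ against the $\mp\tfrac{i}{2}hp'$ coming from $\theta(z+h)-\theta(z)$. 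This is exactly the paper's \Cref{le:V1V} and the final assembly in the proof of \Cref{pro:diag}.
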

\begin{proof}
As $\Psi_0^\pm(z)$ are eigenvectors of $M(z)$ corresponding to its eigenvalues 
$e^{\pm ip(z)}$, 
\begin{equation}\label{eq:TPsi}
T(z)= \Psi_0^{-1}(z+h)\Psi_0(z) 
\begin{pmatrix} e^{ip(z)} & 0\\ 0 & e^{-ip(z)}  \end{pmatrix}.
\end{equation}
In view of (\ref{eq:Psi0}), we have
\begin{equation}
  \label{eq:PsiV}
  \Psi_0(z)=V(z)\begin{pmatrix} e^{i\theta(z)/h} & 0\\ 0 & e^{-i\theta(z)/h}  \end{pmatrix},
\quad V(z)=(V^+(z)\; V^-(z)).
\end{equation}
Formulas~\eqref{eq:TPsi} and  \eqref{eq:PsiV} imply that
\begin{equation}\label{eq:TV}
  T(z)= 
\begin{pmatrix} e^{-\frac{i\theta(z+h)}h} & 0\\ 0 & e^{\frac{i\theta(z+h)}h}  \end{pmatrix}
  W(z)
\begin{pmatrix} e^{\frac{i\theta(z)}h+ip(z)} & 0\\ 0 & e^{-\frac{i\theta(z)}h-ip(z)}  \end{pmatrix},
\end{equation}
where $W(z)=V^{-1}(z+h)V(z)$.
To continue, we need
\begin{lemma} \label{le:V1V}As $h\to 0$, one has
\begin{equation}\label{eq:V1V}
  W(z)=
\begin{pmatrix}e^{ ihp'/2+O(h^2)}& O(h)\\O(h)&  e^{- ihp'/2+O(h^2)}\end{pmatrix}.
\end{equation}
\end{lemma}
\begin{proof}  Using the Taylor's theorem, we get
\begin{equation*}
 W(z)= V^{-1}(z+h)V(z)=I+h(V^{-1})'(z)V(z)+O(h^2).
\end{equation*}
As $(V^{-1}V)'=0$, one has $(V^{-1})'V=-V^{-1}V'$, and
\begin{equation*}
  W(z)=I-hw(z)+O(h^2), \quad w(z)=V^{-1}(z)V'(z).
\end{equation*}
It suffices to check that
\begin{equation}
  \label{as:w}
  w_{11}(z)=-ip'(z)/2,\qquad  w_{22}(z)=ip'(z)/2.
\end{equation}
Let us prove the first formula. 
\\
Let $e^\pm(z)=e^{\int_{z_0}^z \Omega_\pm}$.  One has 
\begin{equation}\label{eq:V}
  V(z)=(r^+(z)e^+(z)\ \  r^-(z)e^-(z))=(r^+(z)\ \  r^-(z))
\begin{pmatrix} e^+(z) & 0\\ 0 & e^-(z)  \end{pmatrix}.
\end{equation}
Therefore, in view of~\Cref{cor:det}, 
\begin{equation}\label{eq:V-inverse}
  V^{-1}(z)=\frac1{\det (r^+(z)\; r^-(z))}
\begin{pmatrix} 1/e^+(z) & 0\\ 0 & 1/e^-(z)  \end{pmatrix}\,
\begin{pmatrix} l^+(z)\\ -l^-(z)  \end{pmatrix},
\end{equation}
and using~\eqref{eq:det:1}, we get finally
\begin{equation*}
  V^{-1}(z)=\frac1{l^+(z)r^+(z)}
\begin{pmatrix} l^+(z)/e^+(z)\\ -l^-(z)/e^-(z)  \end{pmatrix}.
\end{equation*}
Therefore,
\begin{equation*}
 w_{11}(z)=\frac1{l^+(z) r^+(z)}\;\frac{l^+(z)}{e^+(z)}\;(r^+e^+)'(z),
\end{equation*}
and using the definition of $\Omega_+$, see~\eqref{def:omega},
we get 
\begin{equation*}
  w_{11}(z)=\frac{(e^+)'(z)}{e^+(z)}+\frac{l^+(z) (r^+)'(z)}{l^+(z) r^+(z)}=
-ip'(z)/2.
\end{equation*}
This proves the first formula in~\eqref{as:w}.  The second one is checked similarly.
\end{proof}
\noindent
As $\theta(z+h)=\theta(z)+p(z) h+p'(z) h^2/2+O(h^3)$, substituting
representation~\eqref{eq:V1V} into
formula \eqref{eq:TV}, we come to~\eqref{T:as}.
This completes the proof of \Cref{pro:diag}.
\end{proof}
\subsection{Solutions to equations \eqref{eq:Phi} 
and \eqref{main}}
\Cref{eq:Phi} with a matrix $T$ of the form~\eqref{T:as}
is precisely the equation we study in~\cite{F-Shch:17}, see the beginning 
of section 4  and Lemma 4.1   in~\cite{F-Shch:17}. Most of~\cite{F-Shch:17} \ 
(sections 4--6) is devoted to the analysis of this equation.
The results of this analysis are described  as properties  of a vector-function $\tilde X$
defined  in terms of $X$ by formulas (5.1) and (5.2) in \cite{F-Shch:17}. 
Below, we describe these results as properties of $X$.
\\
In~\cite{F-Shch:17}, in the formula analogous to \eqref{T:as}, \ $p$ is a function 
analytic in a regular domain $R$, and, in terms of this function $p$, one defines 
the canonical domains  exactly as in \Cref{sss:CD}. Then one proves that, given a 
bounded canonical domain $K\subset R$, for sufficiently small $h$, there  exist two solutions 
to \cref{eq:Phi} that are analytic in $K$ and admit there as $h\to 0$ the asymptotic 
representations
\begin{equation*}
  X^+(z)=\begin{pmatrix}1\\0\end{pmatrix}
        +\begin{pmatrix}O(h)\\ e^{\frac{2i\theta(z)}h}O(h)\end{pmatrix},\quad\text{and}\quad
 X^-(z)=\begin{pmatrix}0\\1\end{pmatrix}
        +\begin{pmatrix}e^{-\frac{2i\theta(z)}h}O(h)\\O(h) \end{pmatrix}.
\end{equation*}
The representation for $X^+$ follows from Lemma 5.1 in \cite{F-Shch:17}, the 
representation for $X^-$ is obtained as described in section 6.3 in \cite{F-Shch:17}. 
We omit further details and note only that  $\tilde X^\pm$ satisfy  singular integral 
equations on a vertical curve $\gamma$, and that  the crucial observation is that  
if $\gamma$ is a canonical curve, then 
the norms of the integral operators are small. 
\\
Having constructed $X^\pm$, one constructs the solutions $\Psi^\pm$ from 
\Cref{th:main:local} by the formulas $\Psi^\pm(z)=\Psi_0(z)X^\pm(z)$.
This completes the proof of \Cref{th:main:local}  for bounded canonical domains.
\qed
\section{The proofs of Theorems~\ref{th:main:local} 
and~\ref{th:main:global} for unbounded canonical 
domains}\label{s:proof2}
In \cite{F-Shch:18} we studied the one-dimensional difference 
Schr\"odinger equations with the potentials  being trigonometric 
polynomials. Now, we consider \cref{main} with $M$ being a trigonometric
polynomial and prove Theorems~\ref{th:main:local} 
and~\ref{th:main:global} for unbounded canonical 
domains by means of the method developed in \cite{F-Shch:18}.
\\
Again, $R$ is a regular horizontally connected domain, and $p$ is  
a branch of the complex momentum analytic in $R$. As before 
$V^\pm$ are normalized at $z_0\in R$.
\\
Now we assume that the domain $R$ contains an infinite vertical 
curve and is horizontally bounded.  Also, almost up to the 
end of this section, we assume  that
\begin{equation}\label{hyp:exp}
e^{ip(z)}\to 0\quad\text{as}\quad |\im z |\to\infty,
\end{equation}
i.e., that the coefficients $s_u$ and $s_d$ in~\eqref{as:p:up} 
and~\eqref{as:p:down} are equal  to $+1$. 
\\
Finally, $C$ denotes different positive constants independent of $h$,
and, for $z$ being in the domains we consider,  $O(f(z,h))$ is bounded by $C|f(z,h)|$.
\subsection{Asymptotic transformation
of the matrix in~\eqref{main}}  
We begin with transforming  \cref{main} as in the previous section.
\subsubsection{The matrix $W(z)$}
The statements of \Cref{le:V1V} remain valid in any compact subset  of $R$. 
Now, we assume that $Y$ is so large that $\C_u(Y)\cup \C_d(Y)$ is regular and $M_{12}(z)\ne 0$  in 
$\C_u(Y)\cup\C_d(Y)$. Then we continue analytically the functions $p$, $r^\pm$, $V^\pm$ 
and $W$ in $\C_u(Y)\cup\C_d(Y)$ from $R$, and prove 
\begin{lemma} \label{le:W} 
Let $s\in\{d,u\}$. If  $Y$ is sufficiently large,  
then, for  $z\in \C_s(Y)$ one has
\begin{equation}\label{eq:W:diag}
W_{11}(z)=e^{ ihp'/2+h^2g_1(z)},\qquad 
W_{22}(z)=e^{ -ihp'/2+h^2g_2(z)},
\end{equation}
\begin{equation} \label{eq:W:diag:1}
g_1(z)=c_s+O(e^{-2\pi |y|}), \quad g_2(z)=-c_s+O(e^{-2\pi |y|}),
\end{equation}
\begin{equation}
\label{eq:W:anti}
W_{12}(z)=O(he^{-2\pi n_s(M_{12})\,|y|-2\pi |y|}),\quad   
W_{21}(z)=O(he^{2\pi  n_s(M_{12}) |y|}).
\end{equation}
Here $c_s$ is a constant analytic in $h$.
If $M_{22}(z)/M_{11}(z)\to 0$ as $|y|\to\infty$, then $c_s=0$.
\end{lemma}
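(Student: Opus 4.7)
My plan is to analyze each entry of the matrix $W(z)=V^{-1}(z+h)V(z)$ starting from explicit closed formulas obtained by combining the representation $V(z)=(r^+(z)e^+(z),\ r^-(z)e^-(z))$ with the expression for $V^{-1}$ derived in the proof of \Cref{le:V1V}. Setting $e^\pm(z)=\exp\int_{z_0}^z\Omega_\pm$, a short computation gives, for instance,
\begin{equation*}
W_{11}(z)=\frac{l^+(z+h)\,r^+(z)}{\det(r^+\ r^-)(z+h)}\cdot\frac{e^+(z)}{e^+(z+h)},\qquad W_{12}(z)=\frac{l^+(z+h)\,r^-(z)}{\det(r^+\ r^-)(z+h)}\cdot\frac{e^-(z)}{e^+(z+h)},
\end{equation*}
with analogous formulas for $W_{22}$ and $W_{21}$ (the latter two pick up an overall minus sign from the second row of $V^{-1}$). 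The required asymptotic inputs are the expansions of $p$ and $\omega_\pm$ from \eqref{as:p:up}, \eqref{eq:p-prime} and \Cref{pro:omegas-at-infty:u,pro:omegas-at-infty:d}, together with the Fourier structure of the entries of $M$ encoded in \cref{nUnA,nUnA:1}.

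For the diagonal entries I would refine the Taylor expansion already used in \Cref{le:V1V} to one higher order, obtaining
\begin{equation*}
W(z)=I-h\,w(z)+h^2\bigl(w(z)^2-\tfrac12 V^{-1}(z)V''(z)\bigr)+O(h^3),\qquad w_{11}=-ip'/2,\ w_{22}=ip'/2.
\end{equation*}
Passing to the principal branch of the matrix logarithm (well defined because $W\to I$ as $h\to 0$) produces $W_{ii}(z)=\exp\bigl(\mp ihp'(z)/2+h^2 g_i(z)\bigr)$ with $g_i$ analytic in $h$. Substituting the asymptotic formulas from \Cref{pro:omegas-at-infty:u,pro:omegas-at-infty:d}, together with $e^{\pm ip(z)}\to 0$ or $\infty$ and the limiting behavior of $r^\pm,l^\pm$ from \eqref{d:r-pm}, \eqref{d:l-pm}, one verifies that $g_i$ converges at rate $O(e^{-2\pi|y|})$ to a constant. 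The identity $\det W\equiv 1$, which follows from $\det V\equiv\det(r^+(z_0)\ r^-(z_0))$ (\Cref{th:V}), forces $g_1+g_2=O(e^{-2\pi|y|})$ and hence pins the two limits to $c_s$ and $-c_s$. Under the extra hypothesis $M_{22}/M_{11}\to 0$, additional cancellations in the leading Fourier modes make $c_s$ vanish.

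For the off-diagonal entries I exploit the identity $l^\pm(z+h)\,r^\mp(z+h)=0$ (\Cref{le:det}) to rewrite
\begin{equation*}
l^+(z+h)\,r^-(z)=-l^+(z+h)\bigl(r^-(z+h)-r^-(z)\bigr)=-h\,l^+(z+h)\,(r^-)'(z)+O(h^2),
\end{equation*}
and analogously for $l^-(z+h)\,r^+(z)$. Inserting the explicit forms \eqref{d:r-pm}, \eqref{d:l-pm} and using the identity $e^{-ip}-M_{11}=M_{22}-e^{ip}$, I reduce the leading numerator to a combination of the Wronskian $M_{11}M_{12}'-M_{11}'M_{12}$ and an $e^{-ip}$-piece; dividing by $\det(r^+\ r^-)(z+h)\sim M_{12}(z+h)\,\tr M(z+h)$ and multiplying by the ratio of $e^\pm$ factors (of size $e^{\pm 2\pi n_s(M_{12})|y|}$ by the propositions cited above) yields the bound on $W_{21}$ directly and feeds into $W_{12}$ as discussed below.

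The main obstacle is the tight bound on $W_{12}$: the ratio $e^-(z)/e^+(z+h)$ alone contributes only a factor $e^{-2\pi n_s(M_{12})|y|}$, whereas the lemma demands an additional $e^{-2\pi|y|}$. This extra decay comes from a precise cancellation in the numerator: to leading order in the Fourier expansions, the Wronskian $M_{11}M_{12}'-M_{11}'M_{12}$ and the $e^{-ip}$-term in $l^+(z+h)(r^-)'(z)$ both produce contributions of size $|M_{11}M_{12}|$, but their sum collapses via $M_{11}-\tr M=-M_{22}$ to an expression proportional to $M_{22}/\tr M$, which tends to $0$ at infinity thanks to \eqref{eq:d/a}. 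Carefully tracking this cancellation, together with identifying the limiting value $c_s$ of the diagonal $g_i$'s, is the technical heart of the proof; all remaining steps reduce to routine asymptotic bookkeeping using the periodicity in $x$ of the relevant quantities.
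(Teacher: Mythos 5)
Your explicit formulas for $W_{11}$ and $W_{12}$ agree with the paper's \eqref{eq:W11} and \eqref{eq:W12} after using $e^+(z)e^-(z)\det(r^+(z)\ r^-(z))=d_0$, and the idea of exploiting $l^+(z+h)r^-(z+h)=0$ to extract a factor of $h$ from the off-diagonal numerators is a sound alternative to the paper's direct Fourier-mode decomposition. Nevertheless there are two substantive gaps.

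For the diagonal entries, the refined Taylor expansion and the matrix logarithm produce $W_{ii}=\exp(\mp ihp'/2+h^2 g_i)$ only \emph{locally} uniformly: the remainder $O(h^3)$, and with it $g_i$, is not controlled as $|\im z|\to\infty$, so you cannot conclude $g_i=c_s+O(e^{-2\pi|y|})$ uniformly in $h$ from the Taylor expansion together with the $\Omega_\pm$-asymptotics alone. What the paper actually does is quite different: it first shows, from the closed formula \eqref{eq:W11} and the asymptotics of $e^\pm$, $l^\pm$, $r^\pm$, that $W_{11}$ is \emph{bounded} and $1$-periodic in $\C_u(Y)$; it then regards $W_{11}(z)-e^{ihp'(z)/2}$ as an analytic function of $u=e^{2\pi iz}$ on the disk $|u|\le e^{-2\pi Y}$, uses \Cref{le:V1V} to get the $O(h^2)$ bound on the boundary circle, and applies the Maximum modulus principle twice to obtain $W_{11}(z)=e^{ihp'(z)/2}+h^2c+O(h^2e^{-2\pi(y-Y)})$. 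This conformal step is the technical heart of the diagonal estimate and is what converts the local $O(h^2)$ bound into a uniform one with exponential decay; your passing remark about ``periodicity in $x$'' does not supply it.

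For $W_{12}$, the cancellation you identified, reducing the leading contribution to something $\sim M_{22}/\tr M$, covers only part of the picture. You invoke \eqref{eq:d/a} to claim $M_{22}/\tr M\to 0$, but \eqref{eq:d/a} asserts only \emph{boundedness} of $M_{22}/M_{11}$, not decay; when $n_s(M_{22})=n_s(M_{11})$ the ratio does not tend to zero and your argument produces no extra factor $e^{-2\pi|y|}$. In that case (treated separately in the paper) the missing decay comes from a different cancellation: since $\det M\equiv1$ then forces $n_s(M_{12})=n_s(M_{11})$, the Wronskians $M_{11}M_{12}'-M_{11}'M_{12}$ and $e^{\mp ip}M_{12}'-(e^{\mp ip})'M_{12}$ each drop a full Fourier order because their leading terms cancel internally. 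Without that case distinction the off-diagonal bound as you have set it up leaves a genuine hole.
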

\begin{proof}  
Below, we assume that  $z\in \C_u(Y)$; the case of $z\in\C_d(Y)$ is treated similarly. 
We use notations from the proof of~\Cref{le:V1V}.
The analysis is broken into several steps.
We begin with studying $W_{11}$.  
\\
{\bf 1.} \ Let $d_0=\det (r^+(z_0)\; r^-(z_0))$.
By~\eqref{eq:V}, \eqref{eq:V-inverse} and~\eqref{eq:det},
\begin{equation}\label{eq:W11}
  W_{11}(z)=\frac1{d_0}\,e^-(z+h)e^+(z)\, l^+(z+h)r^+(z).
\end{equation}
{\bf 2.} \ Using  \eqref{as:omegas}, we get  
\begin{equation}\label{eq:W11:0}
 e^-(z+h)\,e^+(z)=e^{2\pi i (n_u(M_{11})+ n_u(M_{12})) z+a_0+a_1 h+O(e^{-2\pi y}) },
\end{equation}
where $a_0$ and $a_1$ are constants independent of $h$. Therefore,
\begin{equation}\label{eq:W11:1}
e^-(z+h)\,e^+(z)= O(e^{-2\pi  (n_u(M_{11})+ n_u(M_{12})) y} ).  
\end{equation}
We also note that in, view of~\eqref{eq:omega:periodicity} and~\eqref{eq:W11:0}, 
$e^-(\cdot+h)\,e^+(\cdot)$ is $1$-periodic.
\\
{\bf 3.} \ By means of~\eqref{d:r-pm},~\eqref{d:l-pm} and~(\ref{eq:p}), we check that
\begin{equation}\label{eq:W11:2}
 l^+(z+h)\,r^+(z)=(M_{22}(z+h)-e^{ip(z+h)}) M_{12}(z)+(M_{11}(z)-e^{ip(z)}) M_{12}(z+h).
\end{equation}
Estimates~\eqref{eq:d/a} and~\eqref{est:exp-and-tr}, and formula~\eqref{eq:W11:2}
imply that 
\begin{equation}\label{eq:W11:3}
 l^+(z+h)\,r^+(z)=O(e^{2\pi (n_u(M_{11})+n_u(M_{12}))y}).
\end{equation}
{\bf 4.} \ We have chosen $Y$ so that  $W_{11}$ is analytic in $\C_u(Y)$.  
Moreover, in view of  \eqref{eq:W11:3},  \eqref{eq:W11:1}, 
and \eqref{eq:W11}, it is bounded there.
\\
{\bf 5.} \ The element $W_{11}$ is $1$-periodic in $z\in \C_u(Y)$.
Indeed, by the first step the product $e^+(z+h)e^-(z)$ is $1$-periodic in $z$, 
and \eqref{eq:p:periodicity} and the $1$-periodicity of $M$ imply the  
$1$-periodicity of $l^\pm$ and $r_\pm$.  This and \eqref{eq:W11} imply 
the needed.
\\
{\bf 6.} \  Now we prove the representation for $W_{11}$ from \eqref{eq:W:diag}.
As $\C_u(Y)$ is regular, $p$ is analytic there. By~\eqref{eq:p:periodicity} 
it is $1$-periodic in $\C_u(Y)$. This and the previous two steps imply that
the function $z\to W_{11}(z)-e^{ ihp'(z)/2}$ is  $1$-periodic and analytic  in  $\C_u(Y)$.  
Let us consider it as a function of $u=e^{2\pi i z}$, and denote this new 
function by $f$. In the disk $D=\{|u|\le e^{-2\pi Y}\}$, \ $f$ is  analytic and bounded.
By~\Cref{le:V1V}, on its boundary  one has
$f(u)=O(h^2)$ uniformly in $u$.   This and the Maximum principle for analytic 
functions imply that $f(u)=O(h^2)$ uniformly in $u\in D$. Using the Maximum modulus 
principle again, we see that  $ \frac{f(u)-f(0)}{u}=O(h^2e^{2\pi Y})$ uniformly in  $u\in D$
(as this estimate holds on the boundary of  $D$). Returning to $z$, we get
\begin{equation*}
  W_{11}(z)=e^{ ihp'(z)/2}+h^2c+O(h^2e^{-2\pi( y-Y)}),\quad c=f(0)/h^2, \quad z\in\C_u(Y).
\end{equation*}
As $f(0)=O(h^2)$, $p'$ satisfies \eqref{eq:p-prime:d}, and $Y$ is a fixed positive number, 
this implies the representation for $W_{11}$ from \eqref{eq:W:diag}.
\\
{\bf 7.} \ Let us  assume that $M_{22}(z)/M_{11}(z)\to 0$ as $|y|\to\infty$, 
and prove that $c_u\equiv0$.
\\
Now, instead of~\eqref{eq:W11:3}, for sufficiently large $Y$, for $z\in C_u(Y)$, we get 
\begin{equation*}
 l^+(z+h)r^+(z)=(M_{11})_u(M_{12})_u\,e^{-2\pi i (n_u(M_{11})+n_u(M_{12}))z
-2\pi i n_u(M_{12})h+O(e^{-2\pi y})}.
\end{equation*}
Combining this with~\eqref{eq:W11:0}, we see that
$W_{11}(z)= e^{ \tilde a_0+\tilde a_1 h + O(e^{-2\pi y})}$
with some  constants $\tilde a_0$ and $\tilde a_1$ independent of $h$.
In view of~\eqref{eq:p-prime}, this representation implies 
that the constant $c_u$ in the formula for $W_{11}$ in \eqref{eq:W:diag:1} is zero.
\\
The proof of the statements of \Cref{le:W} concerning $W_{11}$
is completed.
\\
{\bf  8.} \ Let us turn to $W_{12}$ and $W_{21}$.
Instead of~\eqref{eq:W11}, (\ref{eq:W11:1}) and \eqref{eq:W11:2} we get
\begin{gather}
\label{eq:W12}
  W_{12}(z)=\frac1{d_0}\,e^-(z+h)e^-(z)\, l^+(z+h)r^-(z),\\
\label{eq:W12:1}
 e^-(z+h)\,e^-(z)=O(e^{-2\pi  (n_u(M_{11})+ 2n_u(M_{12})) y})\\
\label{eq:W12:2}
\begin{split}
 l^+(z+h)\,r^-(z)&=(M_{22}(z+h)-e^{ip(z+h)}) M_{12}(z)\\
&\hspace{2cm}-(M_{22}(z)-e^{ip(z)}) M_{12}(z+h).
\end{split}
\end{gather}
Let us assume that $e^{ip(z)}=o(M_{22}(z))$ 
as $y\to\infty$. If $Y$ is sufficiently large, then
\begin{gather}
\label{eq:M12}
M_{12}(z)=(M_{12})_u e^{-2\pi i n_u(M_{12}) z +f_{1}(z)},\quad 
f_{1}(z)=O(e^{-2\pi y}),\\
\label{eq:M22}
M_{22}(z)-e^{ip(z)}=(M_{22})_u e^{-2\pi i n_u(M_{22}) z +f_{2}(z)},\quad 
f_{2}(z)=O(e^{-2\pi y}),
\end{gather}
where $f_{1}$ and $f_{2}$ are analytic in $z$.
These formulas and~\eqref{eq:W12:2} imply that 
\begin{equation}\label{eq:W12:3}
\begin{split}
l^+(z+h)\,r^-(z)&=(M_{22})_u (M_{12})_u \;
e^{-2\pi i (n_u(M_{22})+n_u(M_{12})) z  +f_{1}(z)+f_{2}(z)}\times\\
&\hspace{-.5cm}\times(e^{-2\pi i n_u(M_{22})h+f_{2}(z+h)-f_{2}(z)}-
e^{-2\pi i n_u(M_{12})h+ f_{1}(z+h)-f_{1}(z)}).
\end{split}
\end{equation}
Using the Cauchy estimates for the derivatives of analytic functions, in $\C_u(Y)$  
(possibly with a larger  $Y$) we get 
\begin{equation}\label{est:f}
  f_{j}(z+h)-f_{j}(z)=O(he^{-2\pi y}),\quad j\in\{1,2\}.
\end{equation}
Therefore, $$  l^+(z+h)\,r^-(z)=O(he^{2\pi  (n_u(M_{22})+n_u(M_{12})) y}).$$
This and~\eqref{eq:W12:1} imply that 
\begin{equation*}
    W_{12}(z)=O(h e^{-2\pi (n_u(M_{11})-n_u(M_{22})+n_u(M_{12}))y}) \text{ if }
e^{ip(z)}=o(M_{22}(z)) \text{ as }y\to\infty.
\end{equation*}
Let $n_u(M_{22})=n_u(M_{11})=n_u(t)$. Then, as  $\det M\equiv 1$, we have $n_u(M_{12})=n_u(M_{22})$.
So, now the expression in the brackets in \eqref{eq:W12:3} equals 
$$e^{-2\pi i n_u(M_{12})h}\,(e^{f_{2}(z+h)-f_{2}(z)}-e^{f_{1}(z+h)-f_{1}(z)})=O(e^{-2\pi y}),$$
and, therefore,
\begin{equation*}
  W_{12}(z)=O(he^{-2\pi  n_u(M_{12}) y -2\pi y})\quad \text{if}\quad n_u(M_{22})=n_u(M_{11}).
\end{equation*}
Finally, if $M_{22}(z)=O(e^{ip(z)})$, then,  using~\eqref{as:p:up}, 
we get 
\begin{equation}\label{eq:W12:4}
W_{12}(z)=O(h e^{-2\pi (2n_u(M_{11})+n_u(M_{12}))y}).
\end{equation}
The obtained estimates lead to the estimate for $W_{12}$ from~\eqref{eq:W:anti}.
Similarly one proves the estimate for  $W_{21}$. 
\\
{\bf 9.} \ The representation for $W_{22}(z)$ from~(\ref{eq:W:diag})
follows from the representations for the other elements of $M(z)$ and the 
relation $\det M(z)\equiv 1$. 
\\
The proof of \Cref{le:W} is complete.
\end{proof}
\subsubsection{The matrix $T$}
Let us recall that $T(z)$ is described  by~\eqref{eq:TV}.
In addition to \Cref{pro:diag}, we now get
\begin{proposition} \label{pro:T:Y}
Let $s$ be either $u$ or $d$.
For sufficiently large $Y$, in $\C_s(Y)$,
\begin{equation}\label{eq:T:diag}
T_{11}(z)=e^{ h^2c_s  +O(h^2e^{-2\pi |y|}) },\qquad 
T_{22}(z)=e^{ -h^2c_s+O(h^2e^{-2\pi |y|})},
\end{equation}
\begin{equation}\label{eq:T:anti}
\begin{split}
T_{12}(z)&=e^{- \frac{2i\theta(z)}h} O(h\,e^{+2\pi (2n_s(t)- n_s(M_{12}))\,|y|-2\pi |y|}),\\
T_{21}(z)&=e^{+\frac{2i\theta(z)}h} O(h\,e^{ -2\pi (2n_s(t)- n_s(M_{12}))\,|y|}).
\end{split}
\end{equation}
\end{proposition}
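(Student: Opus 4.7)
The plan is to obtain \eqref{eq:T:diag} and \eqref{eq:T:anti} by substituting the refined information on $W(z)$ provided by \Cref{le:W} into the factorization \eqref{eq:TV} and controlling the resulting phase factors by means of the asymptotics \eqref{as:p:up}--\eqref{eq:p-prime:d} for $p$, $p'$ and $p''$. The key preliminary step is the Taylor expansion
\[
\theta(z+h)=\theta(z)+h\,p(z)+\tfrac{h^2}{2}p'(z)+O(h^3 e^{-2\pi|y|}),
\]
where the cubic remainder uses $p''(z)=O(e^{-2\pi|y|})$. Inserting this into the four phase combinations produced by \eqref{eq:TV}, one finds that the phase for the diagonal entries reduces to $\mp ihp'(z)/2+O(h^2 e^{-2\pi|y|})$ and the phase for the off-diagonal entries to $\mp 2i\theta(z)/h\mp 2ip(z)\mp ihp'(z)/2+O(h^2 e^{-2\pi|y|})$.

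For the diagonal entries the factor $e^{\mp ihp'/2}$ produced by the phase cancels exactly against the matching factor in $W_{11}$ and $W_{22}$ given by \eqref{eq:W:diag}, leaving $\exp\!\bigl(\pm h^2 g_j(z)+O(h^2 e^{-2\pi|y|})\bigr)$. Substituting $g_j(z)=\pm c_s+O(e^{-2\pi|y|})$ from \eqref{eq:W:diag:1} yields \eqref{eq:T:diag} at once.

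For the off-diagonal entries I would pull the oscillatory factor $e^{\mp 2i\theta(z)/h}$ outside and absorb the bounded factor $e^{\mp ihp'(z)/2+O(h^2 e^{-2\pi|y|})}$ into the prefactor. The remaining exponential $e^{\mp 2ip(z)}$ satisfies $\lvert e^{\mp 2ip(z)}\rvert=O(e^{4\pi n_s(t)|y|})$ by \eqref{as:p:up}, respectively \eqref{as:p:down}. Multiplying by the off-diagonal bounds \eqref{eq:W:anti} for $W_{12}$ and $W_{21}$ produces the contributions $O\!\bigl(h\,e^{2\pi(2n_s(t)-n_s(M_{12}))|y|-2\pi|y|}\bigr)$ and $O\!\bigl(h\,e^{-2\pi(2n_s(t)-n_s(M_{12}))|y|}\bigr)$, which are exactly the rates in \eqref{eq:T:anti}.

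The two half-plane cases $s=u$ and $s=d$ are treated in parallel, the only difference being whether one invokes the upper or lower half-plane version of the $p$-asymptotics. The step that requires most care is the bookkeeping of the exponents $n_s(t)$, $n_s(M_{11})$, $n_s(M_{12})$ so that the rates in \eqref{eq:T:anti} come out correctly; in particular, the $-2\pi|y|$ improvement in the estimate for $T_{12}$ must be traced back to the matching $-2\pi|y|$ improvement in \eqref{eq:W:anti}, which itself reflects the comparison between $n_s(M_{22})$ and $n_s(M_{11})$ already carried out in the proof of \Cref{le:W}.
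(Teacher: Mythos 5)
Your proposal follows essentially the same route as the paper: the Taylor expansion $\theta(z+h)=\theta(z)+hp(z)+\tfrac{h^2}{2}p'(z)+O(h^3e^{-2\pi|y|})$ from the decay of $p''$, cancellation of the resulting $e^{\mp ihp'/2}$ factors against those appearing in $W_{11},W_{22}$ in \Cref{le:W}, and multiplication of the residual $e^{\mp 2ip}$ against the off-diagonal bounds on $W$. One minor notational slip worth noting: the intermediate bound $\lvert e^{\mp 2ip(z)}\rvert=O(e^{4\pi n_s(t)|y|})$ describes only the minus-sign case, since $\lvert e^{+2ip(z)}\rvert=O(e^{-4\pi n_s(t)|y|})$, but your final exponents in \eqref{eq:T:anti} are the correct ones, so you have evidently applied the sharper decaying estimate when treating $T_{21}$.
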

\begin{proof} In view of~\eqref{eq:p-prime} and~\eqref{eq:p-prime:d},   in $\C_s(Y)\cap R$, one has
\begin{equation}\label{eq:theta(z+h)}
\theta(z+h)=\theta(z)+p(z) h+p'(z) h^2/2+O(h^3e^{-2\pi |y|}).
\end{equation}
This and~\eqref{eq:W:diag} lead to~\eqref{eq:T:diag}. Moreover, 
using~\eqref{eq:W:anti},  we get the estimates
\begin{equation*}
T_{12}(z)=e^{- \frac{2i\theta(z)}h} O(e^{ -2ip(z)}W_{12}(z)),\quad
T_{21}(z)=e^{+\frac{2i\theta(z)}h} O(e^{+2i p(z)}W_{21}(z)).
\end{equation*}
This and representations~\eqref{as:p:up} 
and~\eqref{as:p:down} with $s_u=s_d=1$ lead to~\eqref{eq:T:anti}.
\end{proof}
\subsubsection{Completing the asymptotic transformation}\label{sss:S}
To use the method developed in~\cite{F-Shch:18}, for sufficiently small $h$ 
we  transform \cref{eq:Phi} to the form
\begin{equation}\label{eq:withS}
  {\mathcal X}(z+h)-{\mathcal X}(z)=S(z) {\mathcal X}(z), \quad
  S(z)=\begin{pmatrix}  0 & S_{12}(z)\\ S_{21}(z)&0 \end{pmatrix}.
\end{equation}
To do this, for $X$, a solution to \cref{eq:Phi}, we set
\begin{equation}\label{XmathcalX}
   \mathcal{X}(z)= \begin{pmatrix} e^{-\phi_1(z)} & 0 \\ 
0 & e^{-\phi_2(z)}\end{pmatrix}\, X(z).
\end{equation}
If
\begin{equation}\label{eq:phis}
  \phi_1(z+h)=\ln T_{11} (z)+\phi_1(z)\quad\text{and}\quad 
\phi_2(z+h)=\ln T_{22} (z)+\phi_2(z),
\end{equation}
then $\mathcal X$ satisfies~\cref{eq:withS} with 
\begin{equation*}
S_{12}(z)=e^{-\phi_1(z+h)+\phi_2(z)}T_{12}(z),\quad\quad 
S_{21}(z)=e^{-\phi_2(z+h)+\phi_1(z)}T_{21}(z).
\end{equation*}
To construct solutions to  \eqref{eq:phis}, we need
\begin{definition}\label{def:str_vert}
We call a vertical curve $\gamma$ {\it strictly} vertical if the 
angles between $\gamma$ and $\R$ at all the points $z\in \gamma$ 
are uniformly bounded away from zero.
\end{definition}
\noindent
First, we assume that  $R$ contains a strictly vertical curve $\gamma$ with some its 
$\delta$-neighborhood $V_\delta$ and its boundary $\partial V_\delta$.  
\\
Next, for each $j\in\{1,2\}$, we fix a branch of $ \ln T_{jj}$  in the corresponding 
equation in~\eqref{eq:phis}.  For this, we choose $Y$ as in \Cref{pro:T:Y}.
In view of Propositions \ref{pro:diag}  and \ref{pro:T:Y}, for sufficiently small $h$, 
we can  choose and choose the branch of $\ln T_{jj}$  that is analytic and equals $O(h^2)$ 
in $V_\delta\cup \C_u\cup\C_d$.  Then, we prove
\begin{lemma}\label{le:phi}    For  sufficiently small $h$, there exist functions 
$\phi_1$ and $\phi_2$ analytic  in $V_\delta\cup \C_u\cup\C_d$ and satisfying there 
the corresponding equations in~\eqref{eq:phis}. Moreover, in $V_\delta$ one has
\begin{equation}\label{est:phis}
 |\phi_j(z)|\le Ch (1+|y|).
\end{equation}
In~\eqref{est:phis} the right hand side can be replaced by  $Ch$ 
if $c_u=c_d=0$. 
\end{lemma}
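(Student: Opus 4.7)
My plan is to construct $\phi_j$ via a unified sum/Cauchy-type representation adapted to the structure of $f_j := \ln T_{jj}$ on the connected open set $\Omega := V_\delta \cup \C_u(Y) \cup \C_d(Y)$. By \Cref{pro:diag} and \Cref{pro:T:Y}, the chosen branch of $f_j$ is analytic on $\Omega$, is $O(h^2)$ uniformly, and satisfies $f_j(z) = h^2 c_s + O(h^2 e^{-2\pi|y|})$ in $\C_s(Y)$, $s \in \{u,d\}$. Thus only the part of $f_j$ obtained after subtracting its asymptotic constants is exponentially small, and the solution must combine a piece growing at most linearly in $y$ (to absorb the $c_s$) with a uniformly bounded piece from the exponentially decaying residual.

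I would first treat the sub-case $c_u = c_d = 0$, in which $f_j$ itself decays exponentially at both vertical ends. Here the telescoping sum
\[
\phi_j^{\uparrow}(z) := -\sum_{n=0}^{\infty} f_j(z+nh)
\]
is absolutely and locally uniformly convergent on $\C_u(Y)$ (successive terms bounded by a geometric series of ratio $e^{-2\pi h}$), is analytic there, and telescopes to $\phi_j^{\uparrow}(z+h) - \phi_j^{\uparrow}(z) = f_j(z)$; the analogous downward sum $\phi_j^{\downarrow}(z) := \sum_{n=1}^{\infty} f_j(z-nh)$ does the same job on $\C_d(Y)$. To glue these into a single function analytic on all of $\Omega$, I would use a Cauchy-type integral
\[
\phi_j(z) = \frac{1}{2\pi i}\int_\Gamma f_j(\zeta)\, K(\zeta-z;h)\, d\zeta,
\]
with a kernel $K(w;h)$ whose poles at $w = nh$, $n \in \Z$, have residues reproducing the two one-sided sums (for instance a suitable piecewise use of $\tfrac{\pi}{h}\cot(\pi w/h)$), and with $\Gamma$ running along two horizontal lines at $\im \zeta = \pm Y'$ for some $Y' > Y$, closed off inside $\Omega$ using the horizontal boundedness. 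Exponential decay of $f_j$ at $\pm i\infty$ makes the integral absolutely convergent and analytic in $z \in \Omega$; the shift $z \mapsto z+h$ picks up the simple pole at $\zeta = z$ via the residue theorem to yield $\phi_j(z+h) - \phi_j(z) = f_j(z)$; and estimating $\sum_n h^2 e^{-2\pi|y+nh|} = O(h^2/h) = O(h)$ gives $|\phi_j| \le Ch$ on $V_\delta$, as claimed.

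To reduce the general case to the previous one, I would split $f_j = h^2 c(z) + f_j^{\mathrm{res}}(z)$, where $c(z)$ is an analytic function on $\Omega$ equal to $c_u$ and $c_d$ respectively in $\C_u(Y)$ and $\C_d(Y)$ up to exponentially small errors (which, thanks to the horizontal boundedness of $\Omega$, may be built for example from $f_j$ itself via a contour integral of the above type). The constant-like part is solved by the primitive $\psi_j(z) := h\int_{z_\ast}^z c(\zeta)\, d\zeta$, which satisfies $\psi_j(z+h) - \psi_j(z) = h^2 c(z) + O(h^3 c'(z))$ by Taylor expansion; the $O(h^3)$ error has the same exponentially decaying structure as $f_j^{\mathrm{res}}$ and can be absorbed into it. The residual is then handled by the previous construction. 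On $V_\delta$ the primitive contributes at most $Ch\,|y|$ to $|\phi_j|$ and the residual piece is $O(h)$, giving $|\phi_j(z)| \le Ch(1+|y|)$ in general; when $c_u = c_d = 0$ one can take $c \equiv 0$, eliminating the linear growth and leaving only the $O(h)$ term.

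The main obstacle I expect is the global analyticity across $V_\delta$: the one-sided sums produce analytic solutions in $\C_u(Y)$ and $\C_d(Y)$ separately, and any two solutions of the difference equation differ by an $h$-periodic analytic function, so one must reconcile the two half-plane constructions into a single consistent analytic function on $\Omega$. The unified Cauchy-integral formulation, combined with the horizontal boundedness of $\Omega$ (which sharply restricts the behavior of $h$-periodic analytic functions on vertical strips), is precisely what forces a globally analytic solution. This is the type of construction developed in~\cite{F-Shch:18} for the scalar difference Schr\"odinger equation, and the matrix reduction carried out in \Cref{sss:S} is designed exactly so that the present scalar equations $\phi_j(z+h) = \ln T_{jj}(z) + \phi_j(z)$ fall directly within the scope of that technique.
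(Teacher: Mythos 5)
Your proposal is sound in outline, but it takes a genuinely different route from the paper's. The paper's proof invokes a single closed-form construction (due to Buslaev--Fedotov, as cited): set $l_j(\cdot)=\ln T_{jj}(\cdot-h/2)$ and define
\begin{equation*}
\phi_j(z)=\frac{\pi}{2ih^2}\int_{\gamma(z)}\frac{\int_{z_0}^{\zeta}l_j(t)\,dt}{\cos^2\frac{\pi(z-\zeta)}{h}}\,d\zeta .
\end{equation*}
Here the kernel $1/\cos^2(\pi(z-\zeta)/h)$ decays exponentially along the strictly vertical curve $\gamma(z)$, so the integral converges even though the primitive $\int_{z_0}^{\zeta}l_j$ grows linearly in $|\im\zeta|$ when $c_s\ne 0$; the difference equation follows from the residue theorem applied to the double pole at $\zeta=z+h/2$; the estimate \eqref{est:phis} drops out of $\bigl|\int_{z_0}^{\zeta}l_j\bigr|\le C(1+|y_\zeta|)h^2$ combined with $\int_{\R}\frac{1+|\eta+y|}{\cosh^2(\pi\eta/h)}\,d\eta=O(h(1+|y|))$; and the sharpened bound for $c_u=c_d=0$ comes simply from the primitive then being $O(h^2)$. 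Your plan instead splits $\ln T_{jj}$ into a constant-like piece $h^2c(z)$ (solved by a primitive and absorbing the Taylor error into the residual) and an exponentially decaying residual (solved by one-sided telescoping sums, glued by a Cauchy-type integral). That decomposition can be made to work, but it requires you to actually construct an analytic interpolant $c(z)$ agreeing with $c_u$ and $c_d$ at the two ends (a step you only gesture at), and to reconcile the two half-plane sums into one analytic function on all of $V_\delta\cup\C_u\cup\C_d$. The paper's single integral formula buys exactly what you identified as the hard part --- global analyticity plus control of the $h$-periodic ambiguity --- essentially for free, because the kernel both interpolates the telescoping sums and tolerates the linear growth of the primitive; there is no need to separate out the constants at all.
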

\begin{proof} Below we assume that  $h$ is sufficiently small.
We fix $j\in\{1,2\}$. To construct  $\phi_j$, a solution to the corresponding 
equation in~\eqref{eq:phis},   we use a known construction for a solution to 
a first order difference equation, see,  e.g., section 3.5 in~\cite{B-F:94}.
\\
For $z\in V_\delta$, we denote by $\gamma(z)$ the curve
containing $z$ and obtained from $\gamma$ by translation. 
Clearly, $\gamma(z)$  is a strictly vertical, and $\gamma(z)
\subset V_\delta$. 
\\
Let $l_j(\cdot )=\ln T_{jj}(\cdot -h/2)$, where $\ln T_{jj}$ is the branch we have chosen
just before formulating the lemma.  If $h$ is sufficiently small, $l_j$ is analytic in 
$V_\delta\cup \C_d(Y)\cup \C_u(Y)$ and equals $O(h^2)$ there. We fix  $z_0\in V_\delta$  
and  let
\begin{equation}
  \label{formula:phij}
 \phi_j(z)=\frac{\pi}{2ih^2} \int_{\gamma(z)}\frac{\int_{z_0}^{\zeta}l_j(t)\,dt}
{\cos^2\frac{\pi(z-\zeta)}{h}}\,d\zeta,\quad z\in V_\delta.
\end{equation}
The fact that the integral in~\eqref{formula:phij} converges and 
defines an analytic function follows from the estimate
\begin{equation}\label{eq:int-l}
  \int_{z_0}^{z}l_j(t)\,dt= O((1+|y| ) h^2),\quad z\in V_\delta .
\end{equation}
We note that if $c_u=c_d=0$, then the  right hand side in
\eqref{eq:int-l} can be replaced by $O(h^2)$  (see~\eqref{eq:T:diag}).
\\
The fact that, for $z,z+h\in V_\delta$, \  $\phi_j$ given by~\eqref{formula:phij} 
satisfies~\eqref{eq:phis} follows from the residue theorem.
\\
Having constructed $\phi_j$ in $V_\delta$, one continues it analytically in $\C_d\cup\C_u$ 
just by means of the corresponding equation from~\eqref{eq:phis}.
\\
Finally, \eqref{est:phis} follows from the estimate
\begin{equation}\label{est:phij:1}
 |\phi_j(z)|\le C \int_{-\infty}^\infty\frac{1+|\eta+y|}{\cosh^2\frac{\pi\eta}{h}}\,d\eta.
\end{equation}
If $z\in \gamma$, \eqref{est:phij:1} follows from~\eqref{eq:int-l}, \eqref{formula:phij} 
and the fact that $\gamma$ is strictly vertical. If $z\not\in\gamma$, then one also uses 
the fact that  $\gamma(z)$ is obtained from $\gamma$ by a translation.
\\
If  $c_u=c_d=0$, then, instead of~\eqref{est:phij:1}, one obtains the estimate
$ |\phi_j(z)|\le C \int_{-\infty}^\infty\frac{d\eta}{\cosh^2\frac{\pi\eta}{h}}$, and it
implies~\eqref{est:phis} with  the right hand side replaced by  $Ch$.
\end{proof}
\noindent 
Let $\phi_1$ and $\phi_2$ in~\eqref{XmathcalX} be the functions from
Lemma~\ref{le:phi}. Then,  for sufficiently small $h$, \Cref{le:phi} and 
Propositions~\ref{pro:diag} and \ref{pro:T:Y}
imply that, in $V_\delta\cup \C_d\cup C_u$, the coefficients $S_{12}$ and $S_{21}$ 
from~\eqref{eq:withS}  are analytic and admit the representations
\begin{equation}\label{eq:S}
S_{12}(z)=he^{- \frac{2i\theta(z)}h} g_{12}\quad\text{and}\quad
S_{21}(z)=he^{+\frac{2i\theta(z)}h} g_{21}
\end{equation}
\begin{equation}\label{est:s}
\begin{split}
g_{12}(z)&=O(e^{+2\pi\, \left(2n_s(t)- n_s(M_{12})-1+c\,h\right)\,|y|}),\\
g_{21}(z)&=O(e^{ -2\pi \,\left(2n_s(t)- n_s(M_{12})-c\,h\right)\,|y|}),
\end{split}
\end{equation}
for $z\in \C_s(0)\cap V_\delta$, \ $s \in \{u,d\}$.  Here  $c=0$ if $c_u=c_d=0$. 
\subsubsection{Completing the proofs Theorems~\ref{th:main:local} 
and~\ref{th:main:global} for unbounded canonical 
domains}
\label{sss:completing}
Equation~\eqref{eq:withS} with $S_{12}$ and $S_{21}$ of the form \eqref{eq:S} 
was studied  in~\cite{F-Shch:18}, compare~\eqref{eq:withS} and~\eqref{eq:S} 
with (2.12) and (2.15) from~\cite{F-Shch:18}. Actually, one can use the method  
of~\cite{F-Shch:18} if there are positive constants  $C$,  $C_1$  and $C_2$ 
(independent of $h$)  such that 
\begin{equation*}
  |g_{12}(z)|\le C e^{C_1|y|}, \quad  |g_{21}(z)|\le C e^{C_1|y|},\quad
|g_{12}(z)g_{21}(z)|\le C e^{-C_2|y|},
\end{equation*}
and this occurs in our case. So, we construct analytic solutions to 
\eqref{eq:withS} as in~\cite{F-Shch:18},  focusing only on the  
modifications. 
\\
In view of Lemmas 3.3 from~\cite{F-Shch:18},  in the case when $\tr M$
is a trigonometric polynomial, any unbounded canonical domain $D$ (containing an
infinite vertical curve) can be extended to a canonical domain $\tilde D$ such that 
$D\subset \tilde D$, and that, $\forall z\in\tilde D$, there is a strictly canonical curve  
containing $z$ and contained in $\tilde D$ with some its $\delta$-neighborhood. 
Here, all the canonical curves and domains are canonical with respect to one and 
the same branch of the complex momentum. 
\\
Clearly, it suffices to prove \Cref{th:main:global} only for the extended canonical 
domains. We assume that $K=R$ is such an extended canonical domain and that 
it is canonical with respect to the branch $p$ fixed above in $R$.
\\
We prove \Cref{th:main:global} in several steps.\\
Below all the canonical curves are  in $K$ and are canonical  
with respect to  $p$.
\\
{\bf 1.} \ Let $\gamma$ be a strictly vertical curve contained in $K$ together with 
some its $\delta$-neighborhood  $V_\delta$ and its boundary  $\partial V_\delta$. Let 
us consider the matrix $S$ constructed in $V_\delta$ as in \cref{sss:S}. 
We shall study the integral equation 
\begin{equation}\label{eq:3} 
\mathcal{X}=\begin{pmatrix}1\\ 0\end{pmatrix}+{\mathcal L}_+\,(S\mathcal{X}) 
\end{equation}
where $\mathcal{L_+}$ is the singular integral operator  acting by the formula
\begin{equation}\label{eq:operator-L}   
\mathcal{L_+}g\,(z)=
\frac1{2i h}\,\int\limits_{\gamma}
\left(\cot\left[\frac{\pi(\zeta-z-0)}h\right]-i\right)\,
g\,(\zeta)\,d\zeta,
\end{equation}
on a suitable space of functions defined on $\gamma$. 
We note that, formally, equation~\eqref{eq:3} can be obtained from~\eqref{eq:withS} 
by inverting the difference operator  in the left hand side of~\eqref{eq:withS}, 
see section 7.1 in~\cite{F-Shch:18}.
\\
The matrix $S$ being anti-diagonal, we readily  deduce from~\eqref{eq:3} an 
equation for the first element of the vector $\mathcal X$. In view of~\eqref{eq:S}, 
it can be written in the form
\begin{equation}\label{eq:X10}
\mathcal{X}_1=1+h^2\mathcal{L_+}
\left(\,g_{12}\mathcal{K_+}\left(\,g_{21}\,\mathcal{X}_1\right)\,\right),
\end{equation}
\begin{equation}\label{eq:operator-K}
\mathcal{K_+}f\,(z)=e^{ -\frac{2i\theta\,(z)}h}\,\mathcal{L_+}\,
  \left(e^{ \frac{2i\theta}h}\,f\right)\,(z).
\end{equation}
{\bf 2.} \ To study~\eqref{eq:X10}, we fix $a\in(0,1)$ and define
\begin{equation}\label{eq:Pi-gamma-a}
\Pi_{\gamma,a}= \left\{ z\in \C ~ : ~ \exists \zeta \in \gamma : 
\im \zeta = \im z  \text{ and } |\re \zeta -\re z|<ah \right\}.  
\end{equation}
Furthermore, let $b,c>0$ and
\begin{equation}
  \label{eq:555}
  \rho(z,b,c)=
  \begin{cases}
    e^{+2\pi b y} \text{ if } y\ge 0,\\
    e^{-2\pi c y} \text{ if } y\le 0.
  \end{cases}
\end{equation}
Let  $H_{\gamma,a,b,c}$ be the  space of functions $f$
analytic in  $\Pi_{\gamma,a}$ and such that the numbers
\begin{equation}
  \label{eq:5}
  \|f\|_{\gamma,a,b,c}=\sup_{z\in \Pi_{\gamma,a}}|\rho(z,b,c)\,f\,(z)|
\end{equation}
are finite. This is a Banach space with the norm $\|\cdot\|_{\gamma,a,b,c}$.
\\
In addition to~\Cref{def:str_vert}, we need 
\begin{definition}
We call a canonical curve $\gamma$ {\it strictly} canonical if it is strictly vertical, and 
the derivatives in~\eqref{def:can} are bounded away from zero uniformly in $y\in\R$.
\end{definition}
\noindent We recall that $C$ denote different positive constants independent of $h$. One has
\begin{proposition}\label{pro:LK} 
Let $\alpha$ be an infinite strictly vertical curve, and let $b,c>0$.  Then
$$\|\mathcal{L}_+\|_{H_{\alpha,a,b,c}\mapsto H_{\alpha,a,0,0}}\le C/h.$$
Let $\alpha$ be an infinite strictly canonical curve located in $K$ with some its $\delta$-neighborhood,
and let $b,c\in \R$. Then,  for sufficiently small $h$,
$$\|\mathcal{K}_+\|_{H_{\alpha,a,b,c}\mapsto H_{\alpha,a,b-2an_u(t),c-2an_d(t)}}\le C.$$
\end{proposition}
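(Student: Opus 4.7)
\textbf{Proof plan for \Cref{pro:LK}.}
The starting point is the identity
$\cot(\pi w/h) - i = 2i \bigl(e^{2\pi i w/h}-1\bigr)^{-1}$,
from which the kernel of $\mathcal L_+$ is pointwise bounded by $C/h$ when
$\im(\zeta-z) \ge 0$ (away from the near-diagonal singularity $\zeta \approx z$), and by
$(C/h)\,e^{2\pi\im(\zeta-z)/h}$ when $\im(\zeta-z) \le 0$. The near-diagonal contribution
reduces to a Cauchy kernel $1/(2\pi i (\zeta - z))$, whose action is controlled by the
Sokhotski--Plemelj interpretation of the $\zeta - z - 0$ prescription.

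For the first estimate, the plan is to parameterize $\alpha$ by $y = \im \zeta$
(strict verticality makes arclength comparable to $|dy|$) and split the integral at
$y = \im z$. In the upper half, the kernel is of size $C/h$ and $|g(\zeta)| \le \|g\|\,e^{-2\pi b y}$
(since $b > 0$), producing a contribution $\le (C/(bh))\,\|g\|$. In the lower half, the
kernel gains a factor $e^{-2\pi(\im z - y)/h}$ and $|g(\zeta)| \le \|g\|\,e^{+2\pi c y}$
($c>0$) ensures convergence; the contribution is again $(C/h)\|g\|$, uniformly in $z$.
Analyticity of $\mathcal L_+ g$ on $\Pi_{\alpha,a}$ follows from analyticity of the kernel
off $\alpha$, combined with the boundary-value prescription on $\alpha$.

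For the second estimate, the crucial observation is that on strictly canonical
$\alpha$ the combined kernel of $\mathcal K_+$,
$$
\frac{e^{2i(\theta(\zeta)-\theta(z))/h}}{h\,\bigl(e^{2\pi i(\zeta-z)/h} - 1\bigr)},
$$
decays exponentially at scale $h$ on both sides of $y_\zeta = y_z$. For $y_\zeta \ge y_z$
the denominator is bounded away from $0$, and the first strict inequality in~\eqref{def:can}
gives $\im(\theta(\zeta) - \theta(z)) \ge c_1(y_\zeta - y_z)$ with $c_1 > 0$ uniform
along $\alpha$, forcing numerator decay $e^{-2c_1(y_\zeta - y_z)/h}$. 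For $y_\zeta < y_z$,
factoring $-e^{2\pi i(\zeta-z)/h}$ out of the denominator recasts the kernel (up to a
bounded factor) as $h^{-1}\,e^{2i((\theta - \pi z)(\zeta) - (\theta - \pi z)(z))/h}$,
and the second strict inequality in~\eqref{def:can} gives decay
$e^{-2c_2(y_z - y_\zeta)/h}$ with $c_2 > 0$. Since
$\int h^{-1}\,e^{-c|y_\zeta - y_z|/h}\,dy_\zeta = O(1)$, the $1/h$ in the prefactor is
absorbed, and the total kernel integral is bounded uniformly in $h$ and $z$.

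The weight shift encodes the cost of displacing $z$ off $\alpha$. Writing $z = z_\alpha + \eta$
with $z_\alpha\in\alpha$ and $|\eta|\le ah$ real, a first-order Taylor expansion of
$\theta$ yields
$|e^{-2i\theta(z)/h}|/|e^{-2i\theta(z_\alpha)/h}| \le e^{2a\,|\im p(z_\alpha)| + O(1)}$.
By~\eqref{as:p:up}--\eqref{as:p:down}, $|\im p(z)| = 2\pi n_u(t)\,y + O(1)$ as $y\to+\infty$
and $|\im p(z)| = 2\pi n_d(t)\,|y| + O(1)$ as $y\to-\infty$, which produces precisely the
extra growth $e^{4\pi a n_u(t)\,y}$ (resp.\ $e^{4\pi a n_d(t)\,|y|}$) that matches the
weight degradation from $(b,c)$ to $(b - 2an_u(t),\, c - 2an_d(t))$. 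The main technical
obstacle will be ensuring the uniformity of $c_1, c_2 > 0$ along the \emph{entire}
infinite curve $\alpha$; this is exactly why the proposition requires $\alpha$ to be
\emph{strictly} canonical (the derivatives in~\eqref{def:can} bounded away from $0$
uniformly in $y$), and the $\delta$-neighborhood hypothesis is used to propagate the
strictness to the horizontal displacements of size $ah$ needed in $\Pi_{\alpha, a}$.
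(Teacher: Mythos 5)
Your sketch is essentially correct and follows the approach one expects here; the paper itself does not reprove \Cref{pro:LK} but refers to Propositions 7.1 and 7.2 of \cite{F-Shch:18} (``mutatis mutandis''), and your plan reconstructs the standard mechanism from that line of work: the identity $\cot(\pi w/h)-i = 2i\,(e^{2\pi i w/h}-1)^{-1}$ to split the kernel of $\mathcal L_+$ into a bounded part (with $e^{2\pi\im(\zeta-z)/h}$ decay on the lower side) plus the near-diagonal Cauchy singularity handled via the one-sided boundary value; for $\mathcal K_+$, the two strict canonical inequalities in~\eqref{def:can} to extract exponential decay at scale $h$ on both sides of $y_\zeta=y_z$, absorbing the $1/h$ prefactor; and the horizontal displacement $|\re(z-z_\alpha)|\le ah$ combined with $\im p(z)=\pm 2\pi n_{u/d}(t)\,|y|+O(1)$ from~\eqref{as:p:up}--\eqref{as:p:down} to account exactly for the weight shift $(b,c)\mapsto(b-2an_u(t),\,c-2an_d(t))$. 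The only items left implicit --- the Plemelj-type uniform bound on the near-diagonal contribution to $\mathcal K_+$, and a small sign slip in the factoring $\frac{1}{e^{a}-1}=\frac{e^{-a}}{1-e^{-a}}$ --- do not affect the argument, since you correctly arrive at the combined exponent $e^{2i[(\theta-\pi\cdot)(\zeta)-(\theta-\pi\cdot)(z)]/h}$ and then invoke the second inequality in~\eqref{def:can}. Your identification that strictness of the canonical curve is exactly what makes the constants $c_1,c_2$ uniform along all of $\alpha$, and that $b,c>0$ is needed only in the $\mathcal L_+$ estimate (where the kernel has no $h$-scale decay and convergence of the $y_\zeta$-integral must come from the weights) whereas $b,c\in\R$ suffices for $\mathcal K_+$, is the right reading of the hypotheses.
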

\noindent
Mutandis mutatis, the first and the second statements are proved respectively 
as  Propositions 7.1  and 7.2 from~\cite{F-Shch:18}.
\\
Below we assume that that the curve $\gamma$ is strictly canonical.
\\
Let us fix $a$ so that $0<a<\min\left\{\frac1{2n_u(t)},\,\frac1{2n_d(t)}\right\}$.
\Cref{pro:LK} and estimates~(\ref{est:s}) imply that, for sufficiently small $h$, 
there exists $\mathcal{X}_1$, a solution to \eqref{eq:X10} such that 
$\|\mathcal{X}_1-1\|_{\gamma,a,0,0}\le Ch$, i.e., that $\forall z\in \Pi_{\gamma,a}$
\begin{equation}\label{est:X1}
|\mathcal{X}_1(z)-1|\le Ch.
\end{equation}
{\bf 3.} \ We define the function $\mathcal X_2$  in $\Pi_{\gamma,a}$ by the formula
\begin{equation}
  \label{eq:X2}
\mathcal X_2=h\mathcal{L}_+(e^{\frac{2i\theta}h}g_{21}\mathcal X_1).
\end{equation}
As $\mathcal{L}_+(e^{\frac{2i\theta}h}g_{21}\mathcal X_1)=
e^{\frac{2i\theta}h}\mathcal{K}_+(g_{21}\mathcal X_1)$,
Proposition~\ref{pro:LK} and estimates~\eqref{est:s} and~\eqref{est:X1} 
imply that, for sufficiently small $h$, the function $\mathcal X_2$ satisfies the estimate
\begin{equation*}
  \|e^{-\frac{2i\theta}h}\mathcal X_2
\|_{\gamma,a,2(1-a)n_u(t)-n_u(M_{12})-ch, 2(1-a)n_d(t)-n_d(M_{12})-ch}\le Ch.
\end{equation*}
Therefore, as $2an_{s}(t)<1$, $s\in\{d,u\}$, for sufficiently small $h$, for all $z\in \Pi_{\gamma,a}$ 
\begin{equation}\label{est:X2}
  |e^{-\frac{2i\theta(z)}h}\mathcal X_2(z)|\le Ch e^{-2\pi (n_s(t)-n_s(M_{12})) |y|}.
\end{equation}
Furthermore, as $n_s(t)\ge n_s(M_{12})$,  
for all $z\in \Pi_{\gamma,a}$\;, we get
\begin{equation}\label{est:X2:simple}
 |e^{-\frac{2i\theta(z)}h}\mathcal X_2(z)|\le Ch.
\end{equation}
{\bf 4.} \ It follows from~\eqref{eq:X2} and~\eqref{eq:X10} that $\mathcal X$, 
the vector with the elements $\mathcal X_1$ and $\mathcal X_2$, satisfies 
equation~\eqref{eq:3} in $\Pi_{\gamma,a}$. 
\\
{\bf 5.} \ Let $A$ be an admissible subdomain of $K$ containing $\gamma$
with some its $\delta$-neighbor\-hood.
Let us assume that $h$ is sufficiently small, and prove that
the function $\mathcal X$ is analytic in $A$ and satisfies \cref{eq:withS} 
if $z,z+h\in A$.
\\
The function $\mathcal X$ is analytic between  
$\gamma-ah$ and $\gamma+(a+1)h$. Indeed, $\mathcal X$ is defined and  
analytic between  $\gamma-ah$ and $\gamma+ah$.  This and  the definition 
of $\mathcal L_+$ imply that  the function $\mathcal L_+ (S \mathcal X)$ is 
analytic between $\gamma-ah$ and $\gamma+(a+1)h$. As 
$\mathcal X$ satisfies~\cref{eq:3}, this implies that it is also analytic there. 
\\
Let us assume that $z+h,z$ are located between $\gamma-ah$ and 
$\gamma+(a+1)h$. Computing the difference $\mathcal{L}_+(S\mathcal X)(z+h)-
\mathcal{L}_+(S\mathcal X)(z)$ by means of the residue theorem, 
we check that $\mathcal X$ satisfies \cref{eq:withS}, and so  
$\mathcal X(z+h)=(I+S(z))\mathcal X(z)$.
\\
This allows to continue $\mathcal X$ analytically  from the strip bounded 
by $\gamma-ah$ and $\gamma+(a+1)h$ into the part of $K$ located on the 
right of $\gamma$. 
\\
In view of~\eqref{eq:S} and~\eqref{est:s}, we have 
$\det(1+S(z))=1+O(h^2e^{-2\pi(1-2ch)|y|})$ in $A$.
So, for sufficiently small $h$, for $z\in A$ the matrix $I+S(z)$ is invertible,  and
$\mathcal X(z)=(I+S(z))^{-1}\mathcal X(z+h)$. This  allows to 
continue $\mathcal X$ analytically from the strip bounded by $\gamma-ah$ 
and $\gamma+(a+1)h$ in the part of $A$ located on the left of $\gamma$.
\\
By construction $\mathcal X$ satisfies \cref{eq:withS} if $z,z+h\in A$.
\\
{\bf 6.} Let us show that, for sufficiently small $h$,  estimates \eqref{est:X1} 
and \eqref{est:X2:simple} are valid and uniform in any compact subset of $A$.
\\
Let $z^0\in A$, and let $\gamma^0$ be a strictly canonical curve 
containing $z^0$ and contained in $K$ with some its $\delta$-neighborhood.
Clearly, a strictly canonical curve remains strictly canonical if we 
deform it only in a $\delta$-neighborhood of its point and if this deformation 
is sufficiently small in $C^1$-topology.   Therefore,  $z^0$ is an internal point of a simply 
connected domain $D^0\subset A$ bounded by two strictly canonical curves  
$\gamma^1\subset K$ and $\gamma^2\subset K$ that coincide with 
$\gamma^0$  outside a neighborhood of the point $z^0$.
\\
Let $j\in\{1,2\}$. Deforming in~\cref{eq:X10}  the integration path $\gamma$ 
inside $K$ to $\gamma^j$, one shows that $\mathcal X_1$ is a solution to this 
equation in $H_{\gamma^j,a,0,0}$, and, therefore, satisfies estimate 
\eqref{est:X1} uniformly in $z\in \Pi_{\gamma^j,a}$. Deforming the integration path
in~\eqref{eq:X2} to $\gamma^j$, we come to  estimate \eqref{est:X2:simple}
uniform in $z\in \Pi_{\gamma^j,a}$.
\\
As estimates \eqref{est:X1} and \eqref{est:X2:simple} hold on the boundary of $D^0$,
by the Maximum modulus principle for analytic functions, they
hold  in $D^0$. This implies the needed.
\\
{\bf 7.} \  For sufficiently large  $Y>0$, and for sufficiently small $h$, the 
vector $\mathcal X$ satisfies estimates  \eqref{est:X1} and 
\eqref{est:X2}  in the domain $A(Y)=\{z\in A\,:\,|\im z|>Y\}$. 
\\
Indeed, when proving Lemma 8.1 from \cite{F-Shch:18} (steps 2 and 3 of the 
proof), we checked that if $Y$ is sufficiently large, and $h$ is sufficiently small, 
then there is a constant $C_0>0$ independent of $h$ and  such that  for any point 
$z_0\in A(Y)$, there is a canonical curve $\gamma\subset A$ containing $z_0$ and 
such that the estimates of \Cref{pro:LK} with $C=C_0$  hold. This implies
the needed.   
\\
{\bf 8.} \ As $\mathcal X$ satisfies~\eqref{eq:withS}, one constructs a vector 
solution  to~\eqref{main} by the formulas
\begin{gather}\nonumber
 \hspace{-6cm} \Psi^+(z)=\Psi_0(z)\; 
\begin{pmatrix} e^{\phi_1(z)} & 0 \\ 0 & e^{\phi_2(z)}\end{pmatrix}\, \mathcal X(z)\\
\label{eq:PsiX}
\hspace{1cm} =e^{\frac {i\theta(z)}h+\phi_1(z)}\mathcal X_1(z)
\left( V^+(z)+e^{-\frac {2i\theta(z)}h+\phi_2(z)-\phi_1(z)}\ 
\frac{\mathcal X_2(z)}{\mathcal X_1(z)}\,V^-(z)\right).
\end{gather}
Being defined and analytic in a strip between $\gamma-ah$ and $\gamma+ah+h$,   
the solution $\Psi^+$ can be analytically continued up to an entire function just by
means of \cref{main}.
\\
{\bf 9.} \ Using~\eqref{eq:PsiX}, 
\eqref{est:X1}, \eqref{est:X2:simple} and \eqref{est:phis}, we get 
\begin{equation}\label{eq:PsiX:2}
\Psi^+(z)=e^{\frac {i\theta(z)}h+\phi_1(z)}\mathcal X_1(z)\left(V^+(z)+
O(h e^{C\,h |y|})V^-(z)\right).
\end{equation}
In view of the definition of $\theta$, see \eqref{T:as}, \ \eqref{eq:PsiX:2} implies 
representation~\eqref{as:main} locally uniform in $z$. This completes the proof 
of \Cref{th:main:local} for the solution $\Psi^+$ in the case 
we study (when $e^{i p(z)}\to 0$ as $|\im z|\to\infty$.)
\\
{\bf 10.} \ Now, let us prove \Cref{th:main:global} for the solution $\Psi^+$.
\\
We  fix $Y>0$ sufficiently large. 
In view of~\eqref{d:ev}, \eqref{d:r-pm}, \eqref{as:omegas} and~\eqref{as:omegas:d},
for $s\in\{u,d\}$ and $z\in \C_s(Y)$, we get the estimates:
\begin{equation*}
\begin{split}
V_1^+(z)&\asymp e^{-\pi \,(n_s(t) -2 n_s(M_{12}))\,|y|},\qquad
V_2^+(z)\asymp e^{\pi n_s(t)\,|y|}, \\
V_1^-&(z)=O( e^{-\pi n_s(t)\, |y|}),\qquad
V_2^-(z)=O( e^{-\pi n_s(t)\,|y|}).
\end{split}
\end{equation*}
This,~\eqref{est:X1} and \eqref{est:X2}
imply that, for $j=1,2$,  for sufficiently small $h$,
\begin{equation}\label{est:V}
e^{-\frac {2i\theta(z)}h+\phi_2(z)-\phi_1(z)}\ 
\frac{\mathcal X_2(z)}{\mathcal X_1(z)}\,\frac{V_j^-(z)}{V_j^+(z)}=
O(he^{-2\pi( n_s(t)-2c\, h)\,|y|})=O(h)
\end{equation}
uniformly in $z\in A(Y)$. This and \eqref{eq:PsiX} imply 
the statement of \Cref{th:main:global} on $\Psi^+$ in the case 
we study.
\\
To construct  the solution $\Psi^-$ , one proceeds as
suggested in section 9 of \cite{F-Shch:18}.
\\
We have studied the case of~\eqref{hyp:exp}. The complementary cases 
are analyzed similarly; in section 10 of \cite{F-Shch:18}, 
we indicated the way  to do this. We omit further details.
%
%
\begin {thebibliography}{99}
\bibitem{A-J:09}
{\sc A. Avila, S. Jitomirskaya},  
{\em The Ten Martini Problem},
Annals Math., 170 (2009), pp.~303--342.
\bibitem{B:84}
{\sc M. Berry},
{\em Quantal phase factors accompanying adiabatic changes},
Proc. Roy. Soc. London, A,  392(1984), pp.\; 45--57.
\bibitem{B-F:94}
{\sc V.~Buslaev and  A.~Fedotov}, 
{\em The complex WKB method for Harper equation},
St. Petersburg Math. J., 6(1995), pp.\; 495--517.
\bibitem{B-F:01}
{\sc V.~Buslaev and A.~Fedotov}, 
{\em On difference equations with periodic 
coefficients}, Adv. Theor. Math. Phys., 6 (2001), pp.~1105--1168.
\bibitem{B-F:94a}
{\sc V.~Buslaev and A.~Fedotov}, 
{\em The monodromization and Harper equation}, 
S\'eminaire sur les \'Equations aux D\'eriv\'ees Partielles 1993-1994, Exp. no EXXI,  
\'Ecole Polytech., Palaiseau, 1994, 23 pp.
\bibitem{Dobro}
{\sc S.~Y. Dobrokhotov and A.~V. Tsvetkova}, 
{\em On lagrangian manifolds  related to asymptotics of {H}ermite polynomials}, 
Math. notes, 110 (2018).
\bibitem{Eck}
{\sc A.~Eckstein}, {\em Unitary  reduction  for  the  two-dimensional  
Schr{\"o}dinger  operator  with strong   magnetic   field},
Math. Nachr., 282(2009), pp.~504-525.
\bibitem{Fe:93}
{\sc M.V.~Fedoryuk}, 
{\em Asymptotic Analysis. Linear Ordinary Differential Equations},
Berlin-Heidelberg GmbH, Springer-Verlag, 2009.
\bibitem{F:13} 
{\sc A.~Fedotov}, 
{\em Monodromization method in the theory of almost-periodic equations},  
St. Petersburg Math. J.,  25 (2014), pp.~303-325.

\bibitem{F-K:02}
{\sc A.~Fedotov and F.~Klopp}, {\em Anderson transitions for a family 
of almost periodic {S}chr{\"o}dinger equations in the adiabatic case}, 
Commun. in Math.  Phys., 227 (2002), pp.~1--92.

\bibitem{F-K:04} 
{\sc A.~Fedotov and F.~Klopp},  
{\em Geometric tools of the adiabatic complex WKB method}, 
Asymptotic analysis, 39(2004), pp.~309-357.

\bibitem{F-K:05a}
{\sc A.~Fedotov and F.~Klopp}, {\em Strong resonant tunneling, level repulsion
and spectral type for one-dimensional adiabatic quasi-periodic 
Schr\"odinger operators}, Annales Scientifiques de l'Ecole Normale Superieure, 
4e serie, 2005, pages 889 - 950.

\bibitem{F-K:18} 
{\sc A.~Fedotov and F.~Klopp},  
{\em The complex WKB method for difference equations and 
Airy functions}, to be published in SIAM J. Math.An.;
https://hal.archives-ouvertes.fr/hal-01892639.
\bibitem{F-K:18a} 
{\sc A.~Fedotov and F.~Klopp},  
{\em WKB asymptotics of meromorphic solutions to
difference equations}, Applicable Analysis, 
2019, DOI: 10.1080/00036811.2019.1652735.
\bibitem{F-Shch:17}
{\sc A.~Fedotov and  E.~Shchetka}, {\em  The complex WKB method for 
difference equations in bounded domains}, J. Math. Sci. (New York), 224 (2017), 
pp.\;157--169.
\bibitem{F-Shch:17DD}
{\sc A.~Fedotov, E.~Shchetka}, {\em Berry phase for difference equations}.
In {\it Days on Diffraction, 2017}. Institute of Electrical and Electronics Engineers 
Inc., 2017, 113-115.
\bibitem{F-Shch:18}
{\sc A.~Fedotov, E.~ Shchetka}, 
{\em Complex WKB method for a difference Schr\"odinger equation with the potential 
being a trigonometric polynomial}, St. Petersburg Math. J., 29(2018), 363–381.
\bibitem{G-at-al}
{\sc J.~S. Geronimo, O.~Bruno, and W.~V. Assche}, {\em {WKB} and turning point
  theory for second-order difference equations}, Operator theory: advances and
  app., 69 (1992), pp.~269--301.
\bibitem{GHT:89}
{\sc J.~P. Guillement, B.~Helffer, and P.~Treton}, {\em Walk inside
  {H}ofstadter's butterfly}, J. Phys. France, 50 (1989), pp.~2019--2058.
\bibitem{H-S:88} 
{\sc B.~Helffer, and J.~Sj{\"o}strand},  {\em Analyse semi-classique pour 
l'{\'e}quation de Harper (avec application {\`a} l'{\'e}quationde Schr{\"o}dinger 
avec champ  magn{\'e}tique)}, M{\'e}moires de la SMF (nouvelle s{\'e}rie), 
34(1988), pp.~1-113.
\bibitem{L-Z:03}
{\sc M.~A. Lyalinov, and N.~Y. Zhu}, {\em
A solution procedure for second-order difference equations and its
application to electromagnetic-wave diffraction in a wedge-shaped region},
Proc. R. Soc. Lond. A, 459(2003), pp.~3159-3180.
\bibitem{S:75}
{\sc Y. Sibuya}, {\em Global theory of a second order linear ordinary
differential equation with a polynomial coefficient}, North Holland/American
Elsevier,  1975.
\bibitem{S:83}
{\sc B. Simon},
{\em Holonomy, the quantum adiabatic theorem and Berry's phase},
\emph{Phys. Rev. Lett.}, 51(1983), pp.~2167--2170.
\bibitem{springer}
{\sc G. Springer}, 
{\em Introduction to Riemann surfaces}, New York, Addison–Wesley, 1957.
\bibitem{W:87}
{\sc W.~Wasow}, {\em Asymptotic expansions for ordinary differential
equations}, Dover Publications, New York, 1987.
\bibitem{Wi}
{\sc M.~Wilkinson}, {\em  An exact renormalization group for Bloch electrons in a
magnetic field}, J. Phys. A: Math. Gen., 20 (1987), pp.~4337--4354.
\end{thebibliography}
\end{document}